\newtheorem{theorem}{Theorem}[section]
\newtheorem{corollary}[theorem]{Corollary}
\newtheorem{lemma}[theorem]{Lemma}
\newtheorem{proposition}[theorem]{Proposition}
\newtheorem{claim}[theorem]{Claim}
\newtheorem{definition}[theorem]{Definition}
\newtheorem{example}[theorem]{Example}
\newtheorem{observation}[theorem]{Observation}
\def\squarebox#1{\hbox to #1{\hfill\vbox to #1{\vfill}}}
\newcommand{\qed}{\hspace*{\fill}\vbox{\hrule\hbox{\vrule\squarebox{.667em}\vrule}\hrule}\smallskip}
\newenvironment{proof}{\noindent{\bf Proof:~~}}{\(\qed\)}
\newcommand{\xhdr}[1]{\paragraph{\bf #1}}
\newcommand{\shortversion}[1]{}
\newcommand{\fullversion}[1]{#1}
\begin{document}
\title{Combinatorial Auctions with Endowment Effect}        


%
%

\author{Moshe Babaioff \and Shahar Dobzinski \and Sigal Oren}

\maketitle

\begin{abstract}
We study combinatorial auctions with bidders that exhibit endowment effect. In most of the previous work on cognitive biases in algorithmic game theory (e.g., [Kleinberg and Oren, EC'14] and its follow-ups) the focus was on analyzing the implications and mitigating their negative consequences. In contrast, in this paper we show how in some cases cognitive biases can be harnessed to obtain better outcomes.

Specifically, we study Walrasian equilibria in combinatorial markets. 
It is well known that Walrasian equilibria exist only in limited settings, e.g., when all valuations are gross substitutes, but fails to exist in more general settings, e.g., when the valuations are submodular.
We consider combinatorial settings in which bidders exhibit the \emph{endowment effect}, that is, their value for items increases with ownership.

Our main result shows that when the valuations are submodular, even a mild degree of endowment effect is sufficient to guarantee the existence of Walrasian equilibria.
In fact, we show that in contrast to Walrasian equilibria with standard utility maximizing bidders -- in which the equilibrium allocation must be efficient -- when bidders exhibit endowment effect any \emph{local} optimum can be an equilibrium allocation.

Our techniques reveal interesting connections between the LP relaxation of combinatorial auctions and local maxima. We also provide lower bounds on the intensity of the endowment effect that the bidders must have in order to guarantee the existence of a Walrasian equilibrium in various settings.
\end{abstract}


\section{Introduction}

Research in algorithmic mechanism design typically assumes that bidders are utility maximizers, i.e., they maximize their value for the chosen alternative minus their payment. However, empirical evidence from behavioral economics suggests that this assumption is often inaccurate. In practice, individuals tend to exhibit different cognitive biases that are not captured by the classic model of utility maximization. For example, the price tag might affect their value for a bottle of wine (irrational value assessment); or they may
attribute higher values to items 
once
they own them 
(endowment effect).

A recent line of work in algorithmic game theory (Kleinberg and Oren \cite{kleinbergTime} and their follow-ups 
\cite{present-bias-tang, Kraft-time,Gravin:present-bias, kleinberg-soph, Kleinberg-mult-bias, Kraft-Uncertainty, kraft-penalties}) mathematically models and analyzes the behavior of agents that exhibit cognitive biases in planning settings. In contrast, in this paper we initiate the study of cognitive biases in the central setting of algorithmic mechanism design: combinatorial auctions.\footnote{In the context of algorithmic mechanism design, the only paper that incorporates behavioral assumptions that we are aware of is \cite{ghosh15-ec}. The paper applies prospect theory to a crowdsourcing setting.} Furthermore, this line of work mostly focused on analyzing the implications of cognitive biases and mitigating their \emph{negative} consequences. In this paper we take a different approach and show that in some settings cognitive biases can be harnessed to obtain \emph{better} outcomes.

Walrasian equilibrium is the traditional concept of market equilibrium. Roughly speaking, a market is in Walrasian equilibrium if there exists a price for each item such that every bidder is allocated his most profitable bundle and all items are allocated. Walrasian equilibria are attractive since they are simple and since their pricing structure is easy to understand. Not only that, it is also guaranteed that any allocation of such an equilibrium must be efficient (the ``First Welfare Theorem'').

A major downside of Walrasian equilibria is that often they do not exist at all. 
In the context of combinatorial auctions it is known that a Walrasian equilibrium is guaranteed to exist when the valuations are gross substitutes \cite{gul1999walrasian}. This relatively small class of valuations is in a (formal) sense the maximal one guaranteeing existence \cite{gul1999walrasian}. In particular, a Walrasian equilibrium is not guaranteed to exist if the valuations are submodular.

In this paper we consider bidders in combinatorial auctions that exhibit an endowment effect. 
Our main contribution is utilizing the cognitive biases of the bidders to extend the valuations classes for which a Walrasian equilibrium exists: if bidders exhibit a mild endowment effect then a Walrasian equilibrium is guaranteed to exist even when the valuations are submodular. 


Before formally presenting our model and results, we now discuss the endowment effect.

\xhdr{Endowment Effect.}
Do owning items makes them more valuable to us? The Nobel Laureate economist Richard Thaler coined the term endowment effect \cite{thaler1980toward} to explain situations in which the mere possession of an item makes it more valuable. An experiment by Knetsch \cite{knetschEndowment} provides a stark illustration: two groups of students received goods in return for filling out a questionnaire. One group received mugs and the other chocolate bars. Next, the students were given the option to swap items. Since the items were allocated to the students randomly, we expect that about half of them would have received the less desirable item and would like to switch. In contrary to this logic,  90\% of the students in each group decided to keep their endowed item. 

A classic experiment by Kahneman, Knetsch and Thaler \cite{kahneman1990experimental} attempts to quantify the endowment effect. In this experiment half of the students in a Law and Economics class at Cornell University received a $\$6$ mug. After examining the mug, the students who received the mug were asked at which price are they willing to sell the mug (willingness to accept). The students who did not receive the mug were asked for the price they are willing to pay for the mug (willingness to pay). As in the Knetch's experiment, one could have expected that about $50\%$ of the mug owners could be matched with a buyer that values the mug more than they do. However, in the experiment only $20\%$ of the mug owners sold their mug. 
Moreover, there was a significant gap between the median price sellers were willing to sell at ($ \$5.25$), and the median price that buyers where willing to buy at ($\$2.25$). 

The endowment effect can be thought of as a special case of the status quo bias \cite{samuelson1988status}, according to which individuals have a strong preference for the current state (i.e., a mug owner prefers to stay a mug owner). Kahneman et al. \cite{kahneman1990experimental,kahneman1991anomalies} claim that loss aversion from prospect theory \cite{kahneman-prospect} is the source of the endowment effect (i.e., for a mug owner exchanging the mug for a chocolate bar entails \emph{losing} the mug). K{\H{o}}szegi and Rabin \cite{kHoszegi2006model} incorporate reference points and expectations into prospect theory. Their framework suggests that the endowment effect will only appear in settings in which individuals do not expect to trade.\footnote{See \cite{list2003does,engelmann2010reconsidering} for some experimental work supporting this prediction.} Masatlioglu and Ok \cite{masatlioglu2005rational} propose a model in which the individuals make a rational choice on a constrained set of alternatives defined according to the status quo.


Several papers study endowment effect in auction settings (e.g., \cite{knetsch2001endowment,heyman2004auction}). However, apart from \cite{burson2013multiple} that discusses the endowment effect with multiple identical goods, all considered the single item setting, leaving open the question of incorporating the endowment effect into the more general combinatorial auctions setting. 

\xhdr{The Model.} 

We consider a combinatorial auction with $n$ bidders and a set $M$ of $m$ items. Each bidder $i$ has a valuation function $v_i:2^M\rightarrow \mathbb{R}_+$ that specifies his value for every bundle. The valuation function is non-decreasing and normalized ($v_i(\emptyset)=0$). 

Our modeling of the endowment effect is driven by the essence of this effect: an individual attributes an item a higher value after owning it; his value for items that he does not own remains the same. Formally, bidder $i$ who previously valued item $a$ at $v_i(a)$ will now value it at $\alpha_i \cdot v_i(a)$ for some $\alpha_i\geq 1$.\footnote{The case $0\leq \alpha_i < 1$ is also of conceptual interest. In particular, it models situations of the type ``the grass is always greener on the other side'', when a person values items less once he owns them. The analysis of this case is technically simpler and is provided in \fullversion{Appendix \ref{appendix-small-alpha}}\shortversion{in the full version}.} Similarly, when turning to the more general setting of combinatorial auctions we define the value of a bundle $S_i$ that bidder $i$ owns as $\alpha_i \cdot v_i(S_i)$. Note that it is insufficient to define the new value of bidder $i$ only for the bundle $S_i$, we also need to define how the value of any other bundle $T$ changes when bidder $i$ owns $S_i$. We follow a similar line of reasoning: the value of the items in $S_i\cap T$ is multiplied by $\alpha_i$, while the marginal contribution of the remaining items $T-S_i$ remains the same. Formally, we define the endowed valuation of player $i$ with $\alpha_i$ who is endowed with
bundle $S_i$ to be:\footnote{One can also give a ``classic'' interpretation that does not involve cognitive biases to this transformation: consider an environment with transaction costs, e.g., a company that acquires a new location might be subject to property taxes and/or have to invest in new infrastructure. Similarly, a shop that moves to a new location might have to start a costly advertising campaign to inform its costumers about the move. 
In all these cases after owning their resources the preference of the agents to the new status quo allocation increases.}
\begin{align*}
\forall T \subseteq M,~~~~ v^{S_i,\alpha_i}_i(T) = \alpha_i \cdot v_i(S_i\cap T) + v_i(T - S_i | S_i \cap T)
\end{align*}
where $v(X|Y)=v(X\cup Y)-v(Y)$ denotes the marginal contribution of the bundle $X$ given $Y$.

Recall that a Walrasian equilibrium consists of an allocation $(S_1,\ldots, S_n)$ and prices $(p_1,\ldots, p_m)$ such that:
\begin{enumerate}
\item For each player $i$, $v_i(S_i)-\sum_{j\in S_i} p_j\geq v_i(T)-\sum_{j\in T} p_j$, for every bundle $T \subseteq M$.
\item If $j\notin \cup_iS_i$ then $p_j=0$.
\end{enumerate}

Several works attempted to relax the definition of a Walrasian Equilibrium in order to ensure existence
\cite{feldman2016combinatorial, fu_conditional}\footnote{Feldman et al. \cite{feldman2016combinatorial} use bundle prices instead of individual item prices and do not require market clearance; Fu et al. \cite{fu_conditional} define a conditional equilibrium where each player does not wish to add any items to the bundle he was allocated. Also related are works on simultaneous first \cite{DBLP:conf/sigecom/HassidimKMN11} and second price \cite{christodoulou2008bayesian} auctions.}. Our approach is different; Given that the endowment effect changes the valuations anyways, we are simply interested in a Walrasian equilibrium with respect to the endowed valuations: an allocation $S$ and non-negative item prices $(p_1,\ldots, p_m)$ form an \emph{$(\alpha_1,\ldots, \alpha_n)$-endowed equilibrium} in an instance $(v_1,\ldots, v_n)$ if they constitute a Walrasian equilibrium in the instance $(v^{S_1,\alpha_1}_1,\ldots, v^{S_n,\alpha_n}_n)$. 

When $\alpha=\alpha_1=\ldots=\alpha_n$ we use the compact notation \emph{$\alpha$-endowed equilibrium}. Let $S$ be the allocation and $(p_1,\ldots, p_m)$ be the prices of some  $(\alpha_1,\ldots, \alpha_n)$-endowed equilibrium. Observe that $S$ and $(p_1,\ldots, p_m)$ 
also form an $(\alpha, \alpha, \ldots,\alpha)$-endowed equilibrium, for $\alpha=\max_i\alpha_i$. The reason is that when $S_i$ maximizes the profit of $v_i^{S_i,\alpha_i}$, it also maximizes the profit of $v_i^{S_i,\alpha'}$, for every $\alpha'>\alpha_i$. Thus, in the rest of the paper we let $\alpha = \max_i \alpha_i$ and focus on $\alpha$-endowed equilibria.


Note that for $\alpha=1$ we recover the classic notion of Walrasian Equilibrium while, roughly speaking, as $\alpha$ goes to infinity the players insist more and more on keeping the bundle they were allocated.\footnote{When $\alpha$ is very large this is 
	similar to conditional equilibrium \cite{fu_conditional} as the bidders do not want to discard any item with marginal value greater than $0$.} Throughout the paper we say that $\alpha$ \emph{supports} an allocation $S$ if there exist item prices that together with $S$ form an $\alpha$-endowed equilibrium.

\xhdr{Results.}  Analyzing Walrasian equilibria in the context of endowed valuations brings with it a natural question: when does an endowed equilibrium exist? The simple answer is \emph{always}; we show that in any instance there exists an $\alpha>1$ and an allocation for which an $\alpha$-endowed equilibrium exists. However, this answer completely misses the point as the value of $\alpha$ for which an endowed equilibrium exists might be huge. Recall that the value of $\alpha$ corresponds to the intensity of the endowment effect, thus we expect $\alpha$-endowed equilibria that arise in practice to have small value of $\alpha$ (in the experiment of \cite{kahneman1990experimental} mentioned above, for example, it seems like $\alpha$ was about $2$). 
This leads us to the definition of the endowment gap -- the minimal value of $\alpha$ for which a Walrasian Equilibrium is guaranteed to exist. Thus, the main question that we ask in this paper is \emph{what is the endowment gap for different valuation classes?}


Our main result shows  that for submodular valuations there is always an allocation that is supported by $\alpha=2$. That is, \emph{in combinatorial auctions with submodular valuations the endowment gap is at most $2$. }

The proof of the theorem is constructive: an allocation is a local maximum if the welfare of the allocation cannot be improved by moving a single item from one player to another. We show that in any instance of combinatorial auctions with submodular valuations \emph{any local maximum} is supported by $\alpha=2$. In contrast, the First Welfare Theorem asserts that the allocation in a Walrasian equilibrium is a \emph{global} optimum. Hence, a local maximum that is not a global maximum can be part of a $2$-endowed equilibrium, but cannot be a part of a Walrasian equilibrium.

Our work reveals interesting connections between the integrality gap of the linear program relaxation for combinatorial auctions and the endowment gap. Nisan and Segal \cite{nisan2006communication} show that a Walrasian equilibrium exists if and only if the integrality gap of a natural relaxation of the LP for combinatorial auctions (the ``configuration LP'') is $1$, in other words, if and only if there is an optimal integral solution to the LP. This in turn implies that an (integral) allocation $S$ is supported by $\alpha$ if and only if {it is an optimal solution {of} the (fractional) LP with respect to the perturbed valuations}. 

An equivalent geometric interpretation is the following: consider the polytope defined by a given instance of a combinatorial auction. 
With submodular valuations, there might not be any optimal integral solution on a vertex of the polytope. 
Fix some allocation and consider the change to the objective function of the LP as $\alpha$ grows (changing the corresponding endowed valuations.)
As $\alpha$ grows, the direction changes, rotating towards the direction of the endowed allocation. 

The allocation can be supported by the minimal value of $\alpha$ {(if exists)} for which the (integral) endowment allocation becomes an optimal solution to the LP. 
For submodular valuations, {our result shows that this happens quickly:
when the allocation is a local maximum then a value of $\alpha$ of only $2$ suffices.}

It is not hard to see that the endowment gap is at least the integrality gap.
However, note that the endowment gap is typically strictly larger. More generally, by analyzing the LP we give a precise definition for the minimal value of $\alpha$ required to support a given allocation. Roughly speaking, this minimal value of $\alpha$ that supports an allocation $S$ must be bigger than some combination of the value of any fractional solution plus the ``intersection'' of this fractional solution with $S$. See a formal treatment in Section \ref{subsec-integrality}.

In fact, the LP point of view provides some interesting implications of our main result beyond the realm of endowed valuations: it is implicit in previous work that a local maximum in combinatorial auctions with submodular bidders provides a $2$ approximation to the welfare maximizing solution. One implication of our main result is that a local maximum provides a $2$ approximation even with respect to the optimal \emph{fractional} social welfare. More generally, the equilibrium allocation in an $\alpha$-endowed equilibrium provides an $\alpha$ approximation to the optimal fractional social welfare.


We also show that our analysis of submodular valuations is tight in the following sense: there is an instance in which any local maximum requires $\alpha\geq 2$ to be supported (but there are other allocations that can be supported by a smaller value of $\alpha$). We also show that there is an instance with just $2$ bidders in which \emph{every} allocation requires $\alpha\geq 1.5$ to be supported. We thus conclude that the endowment gap for submodular valuations is between $1.5$ and $2$.

%


What about classes of valuations, like XOS or subadditive?  Here we give a definite negative result: for every $\alpha>1$ there exists an instance with just two bidders with identical XOS valuations and three identical items in which the endowment gap is $\alpha' > \alpha$.

\xhdr{Open Questions.} This work models the endowment effect in combinatorial auctions and analyzes the endowment gap. Our main finding is that every local maximum can be supported by $\alpha=2$ when the valuations are submodular and that the endowment gap for submodular valuations is at least $1.5$. An obvious open question is to close this gap.  A related open question is to analyze the endowment gap of subclasses of submodular valuations. For example, for budget additive valuations we are able to show that the endowment gap is also at least $1.5$, but we are unable to prove that the endowment gap is strictly smaller than $2$ even for this restricted class.

The focus of this paper is on characterizing the existence of $\alpha$-endowed equilibria. An interesting follow-up question is to understand the ``computational endowment gap'': the minimal value of $\alpha$ for which an $\alpha$-endowed equilibrium can be efficiently computed. One would hope that for submodular valuations a local maximum can be efficiently computed. Unfortunately, we show that there are both communication and computation hurdles in finding a local maximum:
finding a local maximum in combinatorial auctions with submodular bidders requires an exponential number of value queries. Moreover, we present a family of succinctly represented submodular valuations for which finding a local optimum is PLS hard.
On the somewhat more positive side, we do know how to find with only polynomially many value queries an allocation (not necessarily a local maximum) that can be supported by some $\alpha>1$. However, this value of $\alpha$ is typically huge. Are there other allocations that can both be efficiently computed and supported by a small value of $\alpha$? Remarkably, we are unable to provide an answer for this question for any reasonable value of $\alpha$, not even for a restricted class like budget additive valuations. In fact, we do not know if finding a local maximum when the valuations are budget additive is computationally hard. This question might be of independent interest, regardless of the specific application to the endowment gap.

Another interesting question is to devise natural auction methods that end up with an endowed equilibrium. If the valuations are gross-substitutes, then there are natural ascending auctions that end up with a Walrasian equilibrium (e.g., \cite{gul1999walrasian}). Are there natural ascending auctions that end up with an endowed equilibrium when the valuations are submodular? One question that might arise while developing such ascending auctions is to understand the extent to which bidders exhibit an endowment effect with respect to items that are temporarily assigned to them during the auction and take this temporary endowment effect -- if exists -- into account.

Finally, a natural measure of how far the market is from equilibrium suggests itself. Recall that a valuation $v$ is $c$-approximated by a valuation $v'$ if for every bundle $S$, $\frac{v'(S)}{c}\leq v(S) \leq v'(S)$. Given valuations $v_1,\ldots, v_n$, define the ``distance to equilibrium'' as the minimal $c$ for which there exist $v'_1,\ldots, v'_n$ such that each $v'_i$ $c$-approximates $v_i$ and the instance $v'_1,\ldots, v'_n$ admits a Walrasian equilibrium. Since the endowment gap for
submodular valuations is at most $2$, this means that the distance to equilibrium of any instance with submodular valuations is at most $2$. It will be interesting to see if this result can be improved, e.g., maybe by using valuations that are not endowed valuations. Similarly, what is the distance to equilibrium of instances with subadditive or XOS valuations?

\section{The Model}

There are $n$ players and a set $M$ of $m$ goods, each agent $i$ has a combinatorial valuation function $v_i: 2^M\rightarrow \mathbb{R}_+$.  We assume that for each player $i$, $v_i$ is monotone ($S\subseteq T$ implies $v_i(S)\leq v_i(T)$) and normalized ($v_i(\emptyset)=0$).
 We use the notation $v_i(T|S)$ to denote $v_i(S\cup T)-v_i(S)$, the marginal value of $T$ given $S$. 


Each player $i$ has a parameter $\alpha_i$ that measures the intensity of his endowment effect. Specifically, if player $i$ is endowed with a bundle $S_i$, then his valuation function is 
\begin{align*}
v_i^{S_i,\alpha_i}(T)&=\alpha_i\cdot v_i(S_i\cap T)+v_i(T-S_i|S_i\cap T) \\
	&= v_i(T) + (\alpha_i-1) \cdot v_i(S_i \cap T)
\end{align*}
We will use both expressions interchangeably.

%
%
%

%


An allocation $S=(S_1,\ldots,S_n)$ assigns to each agent $i$ a set $S_i$ such that for every $i\neq j$, $S_i\cap S_j=\emptyset$.
An allocation $(S_1,\ldots,S_n)$ and (non-negative) item prices $(p_1,\ldots, p_m)$ constitute an \emph{$(\alpha_1,\ldots,\alpha_n)$-endowed equilibrium} if:
\begin{enumerate}
\item For each player $i$, $v^{S_i,\alpha_i}_i(S_i)-\sum_{j\in S_i} p_j\geq v^{S_i,\alpha_i}_i(T)-\sum_{j\in T} p_j$, for every bundle $T\subseteq M$.
\item If $j\notin \cup_i S_i$ then $p_j=0$.
\end{enumerate}
When $\alpha=\alpha_1=\ldots=\alpha_n$ we shorten the name to \emph{$\alpha$-endowed equilibrium}. Let $S$ be the allocation and $(p_1,\ldots, p_m)$ be the prices of some  $(\alpha_1,\ldots, \alpha_n)$-endowed equilibrium. Observe that $S$ and $(p_1,\ldots, p_m)$ 
also form an $(\alpha, \alpha, \ldots,\alpha)$-endowed equilibrium, for $\alpha=\max_i\alpha_i$. 
The reason for this is that when $S_i$ maximizes the profit of $v_i^{S_i,\alpha_i}$, it also maximizes the profit of $v_i^{S_i,\alpha'}$, for every $\alpha'>\alpha_i$. Thus, from this point onwards we let $\alpha = \max_i \alpha_i$ and focus on $\alpha$-endowed equilibrium. 

An allocation $(S_1,\ldots,S_n)$ is \emph{supported by $\alpha$} if there exist prices $(p_1,\ldots, p_m)$ such that the prices and the allocation form an $\alpha$-endowed equilibrium.
In particular, in every instance in which a Walrasian equilibrium exists (e.g., every instance in which the valuation functions are gross substitutes), we obviously have an endowed equilibrium supported by $\alpha=1$. 
In instances where a Walrasian equilibrium does not necessarily exist, we will be looking for the minimal value of $\alpha$ for which an $\alpha$-endowed equilibrium exists. 

The conceptually and technically interesting regime is when $\alpha>1$, that is, a player assigns higher value for items in their endowment, but see \fullversion{Appendix \ref{appendix-small-alpha}}\shortversion{the full version} for the regime $0\leq\alpha<1$.

In this work we are interested in the following valuations classes (each class is contained in its successor in the list):
\begin{itemize}
\item \emph{Additive valuations:} A valuation $v$ is additive if for every $S$, $v(S)=\sum_{j\in S}v(\{j\})$.
\item \emph{Budget additive valuations:} A valuation $v$ is budget additive if there exists $b$ such that for every $S$, $v(S)=\min\{b,\sum_{j\in S}v(\{j\})\}$.
\item \emph{Submodular valuations:} a valuation $v$ is submodular if for every $S,T$, $v(S)+v(T)\geq v(S\cup T)+v(S\cap T)$.
\item \emph{XOS valuations:} a valuation is XOS if there exists additive valuations $\{a_1,\ldots,a_l\}$ such that for every bundle $S$, $v(S)=\max_{1 \leq k \leq l}a_k(S)$. 
\item Subadditive valuations: a valuation $v$ is subadditive if for every $S,T$, $v(S)+v(T)\geq v(S\cup T)$.
\end{itemize}

\section{The Endowment Gap}


Consider some instance of a combinatorial auction with $n$ players with valuations $v_1,\ldots, v_n$ and a set $M$ of $m$ items. For a given instance, the \emph{endowment gap} is, roughly speaking, the minimal value of $\alpha$ for which an $\alpha$-endowed equilibrium exists in that instance.
We are interested in proving bounds on the value of $\alpha$ for which an $\alpha$-endowed equilibrium exists for all instances of classes of combinatorial  valuations (e.g. submodular, XOS, subadditive).

\begin{definition}
	The \emph{endowment gap of an instance $(v_1,\ldots, v_n)$ with respect to an allocation $A=(A_1,\ldots, A_n)$}, denoted $\mathcal{G}^A(v_1,\ldots, v_n)$, is the infimum of the values of $\alpha$ that support $A$. 
\end{definition}
We naturally extend the definition of an endowment gap to an instance and to a class of valuations:
\begin{definition}
~
\begin{itemize}
\item The \emph{endowment gap of an instance $(v_1,\ldots, v_n)$} is the minimum value, over every allocation $S$, of the endowment gap with respect to the allocation $S$: 
$\min_{S}\mathcal{G}^S(v_1,\ldots, v_n)$.

\item 
The \emph{endowment gap of a class of valuations $\mathcal V$} is the supremum over the endowment gaps over all {valuations profiles in which each valuation belongs to the class $\mathcal V$}: 
$$
\sup_ {(v_1,\ldots,v_n)\in \mathcal V^n} \min_{S}\mathcal{G}^S(v_1,\ldots, v_n)
$$ 
\end{itemize}
\end{definition}

Next, we provide a simple characterization that shows that any allocation $A$ that its social welfare cannot be improved by reallocating items that do not contribute to the social welfare of $A$, can be supported  by some $\alpha>1$.

\begin{definition}
Let $S=(S_1,\ldots, S_n)$ be some allocation. For every item define $q_j$ to be the marginal contribution of item $j$ to the allocation $S$ as follows: if there exists $i$ such that $j\in S_i$, let $q_j=v_i(j|S_i-\{j\})$ . For every item $j$ that is not allocated, let $q_j=0$. 
Let $Z=\{j|q_j=0\}$. $S$ is \emph{maximal} if for every player $i$, $v_i(S_i\cup Z)=v_i(S_i)$.
\end{definition}
Note that in particular, in a maximal allocation there is no bidder with zero marginal value for an item, for which some other bidder has positive marginal value given his set.

\begin{proposition}\label{prop-maximal}\ 
\begin{enumerate}
\item Every maximal allocation $S=(S_1,\ldots, S_n)$ can be supported by some $\alpha>0$. Furthermore, for a given allocation, we can find some $\alpha$ that supports it and the prices with $poly(n,m)$ value queries.
\item If an allocation $S=(S_1,\ldots, S_n)$ is not maximal then there is no $\alpha\geq 1$ that supports it.
\end{enumerate}
\end{proposition}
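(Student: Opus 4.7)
For part 2, my first observation is that in any candidate $\alpha$-endowed equilibrium (with $\alpha\geq 1$), every item $j\in Z$ must carry price zero. Unallocated items are priced at zero by the second equilibrium condition. For an allocated item $j\in S_{i'}$ with $q_j=0$, the equality $v_{i'}(S_{i'}\setminus\{j\})=v_{i'}(S_{i'})$ forces $v_{i'}^{S_{i'},\alpha}(S_{i'})=v_{i'}^{S_{i'},\alpha}(S_{i'}\setminus\{j\})$, so any $p_j>0$ would make dropping $j$ strictly profitable, contradicting equilibrium. If $S$ is not maximal, some bidder $i^*$ has $v_{i^*}(S_{i^*}\cup Z)>v_{i^*}(S_{i^*})$; the deviation to $T=S_{i^*}\cup Z$ leaves the total price paid unchanged (items in $Z\setminus S_{i^*}$ are free) and raises the endowed value by $v_{i^*}(S_{i^*}\cup Z)-v_{i^*}(S_{i^*})>0$, contradicting equilibrium.

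For part 1, assume $S$ is maximal, and set $V=\max_i v_i(M)$ and $q_{\min}=\min\{q_k:k\in\bigcup_i S_i,\,q_k>0\}$ (with the convention $q_{\min}=+\infty$ if this set is empty, in which case the bound below is interpreted as $0$). Define prices $p_j=0$ for $j\in Z$ and $p_j=V$ otherwise, and take $\alpha=\max\{1,(m+1)V/q_{\min}\}$. For any player $i$ and deviation $T$, let $D=S_i\setminus T$ and $A=T\setminus S_i$; the profit-maximization inequality is equivalent to
\[
\alpha\cdot v_i(D\mid S_i\setminus D) + \sum_{k\in A} p_k \;\geq\; \sum_{k\in D} p_k + v_i(A\mid S_i\setminus D).
\]
I verify this by cases. (i) If $D\not\subseteq Z$, pick any $k\in D$ with $q_k>0$; since $S_i\setminus D\subseteq S_i\setminus\{k\}$, monotonicity gives $v_i(D\mid S_i\setminus D)\geq q_k\geq q_{\min}$, so the left-hand side is at least $(m+1)V$, dominating the right-hand side's $\leq|D|V+V\leq(m+1)V$. (ii) If $D\subseteq Z$ and $A\not\subseteq Z$, the price terms on the right vanish and some item of $A$ is priced at $V\geq v_i(M)\geq v_i(A\mid S_i\setminus D)$. (iii) If $D,A\subseteq Z$, all price terms vanish, and maximality combined with $A\cup(S_i\setminus D)\subseteq S_i\cup Z$ yields $v_i(A\cup(S_i\setminus D))\leq v_i(S_i\cup Z)=v_i(S_i)$; subtracting $v_i(S_i\setminus D)$ gives $v_i(A\mid S_i\setminus D)\leq v_i(D\mid S_i\setminus D)$, and any $\alpha\geq 1$ works.

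All ingredients are computable with $O(nm)$ value queries: the queries $v_i(S_i)$ and $v_i(S_i\setminus\{k\})$ recover every $q_k$ and hence $Z$; the queries $v_i(S_i\cup Z)$ verify maximality; and $v_i(M)$ bounds $V$. The main conceptual hurdle is case (iii): a priori one might fear that bidders could profitably swap items in $Z$ among themselves for free (since, by part 2, prices there are forced to vanish), and no finite $\alpha$ could prevent such a swap; the definition of maximality is tailored precisely to exclude these swaps, and the short monotonicity chain in that case is what allows the construction to succeed for every maximal $S$.
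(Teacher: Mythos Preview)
Your proof is correct and follows essentially the same approach as the paper: for part~2 you use the forced zero prices on $Z$ together with the deviation $S_{i^*}\cup Z$, exactly as the paper does; for part~1 you use the same scheme of high uniform prices on $M\setminus Z$, zero prices on $Z$, and a large $\alpha$ proportional to $mV/q_{\min}$, with maximality handling the free items. Your direct three-case verification of the profit inequality is a bit more compact than the paper's two-step ``never drop items outside $Z$'' then ``never add items outside $S_i\cup Z$'' argument, and your constants ($V$ and $(m+1)$ in place of $2\overline{OPT}$ and $20m$) are tighter, but the underlying idea is identical.
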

\fullversion{
\begin{proof}
Let $Q^+=M-Z$ (the set of items $j$ with positive marginal contribution.
Let $\overline{OPT}$ be some upper bound on the value of the optimal solution (e.g., $n\cdot \max_iv_i(M)$ or simply $OPT$ if computational considerations are irrelevant). We use the following prices to support $S$: $p_j=2\cdot \overline{OPT}$ for $j\in Q^+$ and $p_j=0$ for $j\in Z$. We will show that $\alpha=\frac {20m\cdot \overline{OPT}} {\min_{j\in Q^+}q_j}$ and these prices form an endowed equilibrium.
	
We first show that a player will never drop items that are in his set but not in $Z$. I.e., if we denote by $T$ some bundle that maximizes the profit of player $i$ then $S_i- Z\subseteq T$. Specifically, we will show that $v_i(S_i\cap T)=v_i(S_i)$. Observe that $v_i(S_i\cap T)=v_i(S_i)$ indeed implies that $S_i- T\subseteq Z$, since otherwise there exists $j\in Q^+$ such that $j\in S_i- T$. We then get a contradiction since $v_i(S_i)>v_i(S_i- \{j\})\geq v_i(S_i\cap T)$, where the first inequality is because $j\in Q^+$ and the second one is due to the monotonicity of $v_i$ and the fact that $j\notin T$.

We now show that the profit of player $i$ from a bundle $T$ such that $v_i(S_i\cap T)<v_i(S_i)$ is strictly smaller than the profit of the bundle $T\cup S_i$.
	\begin{align*}
		v_i^{S_i,\alpha}(T\cup S_i)-\sum_{j'\in T\cup S_i}p_{j'}&=
		\alpha\cdot  v_i(S_i)+  v_i(T - S_i|S_i)-\sum_{j'\in S_i}p_{j'}-  \sum_{j'\in T - S_i}p_{j'} \\
		&= (\alpha-1)\cdot  v_i(S_i)+  v_i(T)-\sum_{j'\in S_i}p_{j'}-  \sum_{j'\in T- S_i}p_{j'} \\
		&= (\alpha-1)\cdot  v_i(S_i\cap T)+ (\alpha-1)\cdot  v_i(S_i- T|S_i\cap T)+ v_i(T)-\sum_{j'\in T}p_{j'}-  \sum_{j'\in S_i - T}p_{j'}
	\end{align*}
Observe that
$v_i(S_i- T|S_i\cap T) = v_i(S_i)- v_i(S_i\cap T) \geq q_j\geq  \min_{j\in Q^+}q_j$ (which holds by our discussion above since $v_i(S_i)-v_i(S_i\cap T)\geq q_j$ as $j\notin T$ and since $j\in Q^+$) and $(\alpha-1)\cdot \min_{j\in Q^+}q_j=(\frac {20m\cdot \overline{OPT}} {\min_{j\in Q^+}q_j}-1)\cdot \min_{j\in Q^+}q_j=20m\cdot\overline{OPT}-\min_{j\in Q^+}q_j\geq 19m\cdot\overline{OPT}$. Thus, we have that:

	\begin{align*}
		v_i^{S_i,\alpha}(T\cup S_i)-\sum_{j'\in T\cup S_i}p_{j'}
		&\geq (\alpha-1)\cdot  v_i(S_i\cap T)+   19m\cdot \overline{OPT}+ v_i(T)-\sum_{j'\in T}p_{j'}-  2m\cdot \overline{OPT} \\
		&>(\alpha-1)\cdot  v_i(S_i\cap T)+  v_i(T)-\sum_{j'\in T}p_{j'} \\
		&=v_i^{S_i,\alpha}(T)-\sum_{j'\in T}p_{j'}
	\end{align*}

We next show that the demand of any player is always a subset of  $S_i\cup Z$. 
That is, if $T$ is some bundle that maximizes the profit of player $i$ then $T\subseteq S_i\cup Z$. 
Let $R=T-  (S_i\cup Z)$ and suppose towards contradiction that $R\neq \emptyset$. We claim that the profit of the bundle $T-  R$ for player $i$ is strictly higher than that of $T$:
\begin{align*}
v_i^{S_i,\alpha}(T)-\sum_{j\in T}p_{j}&=
v_i^{S_i,\alpha}((T-  R) \cup R)-\sum_{j\in T}p_{j}\\
&=v_i^{S_i,\alpha}(T-  R) + v_i(R| (T-  R))-\sum_{j\in T-  R}p_{j} - \sum_{j\in R}p_{j}\\
&\leq v_i^{S_i,\alpha}(T-  R) + \overline{OPT}-\sum_{j\in T-  R}p_{j} - 2\overline{OPT}\\
&< v_i^{S_i,\alpha}(T-  R) -\sum_{j\in T-  R}p_{j}
\end{align*}

Together with our first observation, we have that a bundle $T$ that maximizes the profit of player $i$ must satisfy $T=S_i\cup Z'$, for some $Z'\subseteq Z$. 
Recall that $v_i(S_i\cup Z')\leq v_i(S_i \cup Z)=v_i(S_i)$ and that the price of every $j\in Z$ is $p_j=0$ and we get that the profit from $T$ is exactly the profit of $S_i$. Thus, for every player $i$, $S_i$ is a profit maximizing bundle, as needed. Finally, notice that to compute the prices and some $\alpha>0$ we only need to find some upper bound on OPT (as noted above, computing $n\cdot \max_iv_i(M)$ takes $n$ value queries) and the marginal contribution of every item $j$ ($2$ queries for each item).

For the second part of the proof, consider an allocation $S$ that is not maximal. We will see that for any $\alpha$ there are no prices that $\alpha$-support this allocation. The proof is based on the simple observation that in any endowed equilibrium the price of every item $j\in Z$ must be $0$: this must be the case by definition for every item $j$ that is not allocated. If item $j\in S_i$ and the price of $j\in Z$ is positive, then the profit of player $i$ from the bundle $S_i- \{j\}$ is greater than his profit from the bundle $S_i$. 

Since $S$ is not maximal, there is some player $i$ such that $v_i(S_i\cup Z) > v_i(S_i)$. Using the simple observation, the price of every item $j\in Z$ is $0$, thus the profit of player $i$ from the bundle $v_i(S_i\cup Z)$ is strictly larger than the profit from the bundle $v_i(S_i)$. Therefore, $S$ cannot be $\alpha$-supported, for any $\alpha$.
\end{proof}
} 

\fullversion{In Appendix \ref{sec:support-instance}}\shortversion{The proof can be found in the full version. In addition, in the full version of the paper} we use this characterization to show that not only in every instance there is an allocation that can be supported, but there is even \emph{some welfare maximizing allocation} that can be $\alpha$-supported by some $\alpha>1$. The caveat is that the $\alpha$ that we guarantee might be huge. In Claim \ref{claim-endowment-vs-integarlity} we give the exact value of the minimal $\alpha$ that supports an allocation $S$. However, even this precise characterization might result in large values of $\alpha$. This is no coincidence: Proposition \ref{prop-XOS} shows that for every fixed $\alpha$ there is an instance for which the endowment gap is strictly bigger than $\alpha$. Thus in most of this paper we restrict our attention to specific classes of valuations,
aiming to find bounds on the endowment gap that hold for all instances in the class and, ideally, find prominent classes of valuations for which the gap is small. 

\subsection{The Endowment Gap and the LP Relaxation}\label{subsec-integrality}
\label{sec:int-gap}
In this section we explore the connections between the endowment gap and the following linear program relaxation for combinatorial auctions:

\vspace{0.1in}\noindent  \emph{Maximize:} 
$\sum_{i=1}^n \sum_{S\subseteq M} x_{i,S} \cdot v_i(S)$

\noindent  \emph{Subject to:}
  \begin{itemize}
      \item For each item $j$: $\sum_{i=1}^n \sum_{S\subseteq M | j \in S}~x_{i,S}\leq 1$.
      \item for each bidder $i$: $\sum_{S \subseteq M}~x_{i,S}\leq 1$.
      \item for each $i$, $S$: $x_{i,S}\geq 0$.
  \end{itemize}
This linear program is tightly connected to the notion of \emph{Walrasian equilibrium}:
\begin{theorem}[\cite{nisan2006communication}]
For every instance $(v_1,\ldots, v_n)$, there exists
a Walrasian Equilibrium in the instance $(v_1,\ldots, v_n)$ if and only if the integrality gap of the above linear program is $1$.
Moreover, an integral allocation is the allocation of some Walrasian Equilibrium if and only if it is an optimal solution to the LP.
\end{theorem}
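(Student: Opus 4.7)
The natural plan is to attack this via LP duality. First I would write down the dual of the configuration LP: with dual variables $p_j \geq 0$ for the item constraints and $u_i \geq 0$ for the bidder constraints, the dual is to minimize $\sum_j p_j + \sum_i u_i$ subject to $u_i + \sum_{j\in S} p_j \geq v_i(S)$ for every $i$ and every $S \subseteq M$. The key observation is that the dual variables already carry the intended economic meaning: the $p_j$'s will serve as item prices, and each dual constraint says exactly that $u_i$ upper-bounds the profit $v_i(S) - \sum_{j \in S} p_j$ of bidder $i$ from bundle $S$.

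For the ``easy'' direction, suppose $(S_1,\ldots,S_n)$ with prices $(p_1,\ldots,p_m)$ is a Walrasian equilibrium. I would produce a primal solution by setting $x_{i,S_i}=1$ (and all other $x_{i,S}=0$), and a dual solution with the given $p_j$'s and $u_i := v_i(S_i) - \sum_{j\in S_i} p_j$. Condition 1 of Walrasian equilibrium ensures that these $u_i$'s are feasible for the dual; condition 2 (unallocated items have price zero) is used to show that $\sum_j p_j = \sum_i \sum_{j \in S_i} p_j$, from which a direct calculation gives primal objective $= \sum_i v_i(S_i) =$ dual objective. Hence both are LP-optimal, the optimum is attained by an integral solution, and so the integrality gap equals~$1$. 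This also handles one direction of the ``moreover'' clause: a Walrasian allocation is always an optimal LP solution.

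For the converse, suppose the integrality gap is $1$, and let $(S_1,\ldots,S_n)$ be any integral optimal primal solution. Pick any optimal dual solution $(p_1,\ldots,p_m; u_1,\ldots,u_n)$, which exists and has the same objective value by strong LP duality. Now I would apply complementary slackness. Since $x_{i,S_i} > 0$, the corresponding dual constraint is tight: $u_i = v_i(S_i) - \sum_{j\in S_i} p_j$. Combined with dual feasibility $u_i \geq v_i(T) - \sum_{j \in T} p_j$ for all $T$, this gives Walrasian condition~1. For Walrasian condition~2, if item $j$ is unallocated then the $j$-th primal item constraint is slack, so complementary slackness forces $p_j = 0$. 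Thus $(S_1,\ldots,S_n)$ with prices $(p_1,\ldots,p_m)$ is a Walrasian equilibrium, which finishes both the main equivalence and the second half of the ``moreover'' clause.

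The only delicate point I anticipate is purely one of bookkeeping: the LP has exponentially many variables $x_{i,S}$ and the dual has exponentially many constraints, so one must be comfortable invoking strong duality and complementary slackness in this infinite-dimensional-looking (but still finite) form, and one must carefully exploit the disjointness of the $S_i$'s together with the zero-price-on-unallocated-items property to equate the two objective values in the forward direction. No combinatorial or analytic obstacle arises beyond that.
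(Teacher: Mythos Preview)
Your LP-duality argument is correct and is in fact the standard proof of this result. Note, however, that the paper does not supply its own proof of this theorem at all: it is quoted from \cite{nisan2006communication} and used as a black box, so there is no ``paper's proof'' to compare against. What you wrote is essentially the Nisan--Segal argument.
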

When considering endowed valuations $v^{A_1,\alpha}_1,\ldots,v^{A_n,\alpha}_n$ the theorem immediately implies an analogous result for $\alpha$-endowed equilibrium:
\begin{corollary}\label{cor-integralitygap-is-1}
In an instance $(v_1,\ldots,v_n)$ an allocation $A=(A_1,\ldots, A_n)$ is $\alpha$-supported if and only if  
$A$ is an optimal solution to the LP of the instance $(v^{A_1,\alpha}_1,\ldots,v^{A_n,\alpha}_n)$  
(implying in particular that the integrality gap of the latter instance is $1$.) 
Furthermore, by the first welfare theorem we get that $A$ is welfare maximizing with respect to $v^{A_1,\alpha}_1,\ldots,v^{A_n,\alpha}_n$.
\end{corollary}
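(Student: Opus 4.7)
The plan is essentially to apply the Nisan--Segal theorem directly to the transformed (endowed) instance. The key observation is that the notion of an $\alpha$-endowed equilibrium was defined precisely so that, for a fixed allocation $A$, the pair $(A,p)$ is an $\alpha$-endowed equilibrium in $(v_1,\ldots,v_n)$ if and only if $(A,p)$ is a classical Walrasian equilibrium in the instance $(v^{A_1,\alpha}_1,\ldots,v^{A_n,\alpha}_n)$. Both conditions (profit maximization and zero prices on unallocated items) are stated with respect to the endowed valuations, so the two definitions coincide verbatim. Thus the corollary reduces to plugging the endowed instance into the cited theorem.

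For the forward direction, suppose $A$ is $\alpha$-supported, so there exist item prices $p$ with $(A,p)$ an $\alpha$-endowed equilibrium. By the definitional equivalence, $(A,p)$ is a Walrasian equilibrium in $(v^{A_i,\alpha}_i)_i$. Applying the ``moreover'' part of the Nisan--Segal theorem to this instance shows that the integral allocation $A$ is an optimal solution to its LP relaxation (and, in particular, the integrality gap is $1$).

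For the reverse direction, assume $A$ is an optimal LP solution for the instance $(v^{A_i,\alpha}_i)_i$. Since $A$ is integral and optimal, the integrality gap of this LP is $1$, so by the first part of Nisan--Segal a Walrasian equilibrium exists, and by the second part there are prices $p$ making $(A,p)$ such an equilibrium. Again by the definitional equivalence, $(A,p)$ is an $\alpha$-endowed equilibrium in $(v_1,\ldots,v_n)$, so $A$ is $\alpha$-supported.

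The ``furthermore'' clause about welfare maximization with respect to the endowed valuations is an immediate consequence of the first welfare theorem applied to the Walrasian equilibrium in $(v^{A_i,\alpha}_i)_i$ produced above. There is no real obstacle here beyond unpacking definitions carefully; the only subtlety worth flagging is that the endowed valuations depend on the allocation $A$ itself, so one must be careful to apply Nisan--Segal to the \emph{specific} endowed instance associated with the candidate allocation $A$ rather than to some other instance.
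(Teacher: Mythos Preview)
Your proposal is correct and matches the paper's approach exactly: the paper presents this corollary as an immediate consequence of applying the Nisan--Segal theorem to the endowed instance $(v^{A_1,\alpha}_1,\ldots,v^{A_n,\alpha}_n)$, and you have simply spelled out the definitional unpacking that makes this work. The subtlety you flag about the endowed valuations depending on $A$ is apt but does not affect the argument.
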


We can also relate the welfare of supported allocations to that of fractional allocations. In Subsection \ref{subsec-implication} we use the next corollary to improve the bounds on the welfare guaranteed by local maxima in combinatorial auctions with submodular valuations. 
\begin{corollary} \label{cor-fractional-approximation}
In an instance $(v_1,\ldots,v_n)$, if an allocation $A=(A_1,\ldots, A_n)$ is $\alpha$-supported then it provides an $\alpha$-approximation to the {maximum \emph{fractional} welfare of the instance $(v_1,\ldots, v_n)$}. 
\end{corollary}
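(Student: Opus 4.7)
The plan is to combine Corollary \ref{cor-integralitygap-is-1} with the monotonicity of the endowment transformation in $\alpha$. By Corollary \ref{cor-integralitygap-is-1}, if $A$ is $\alpha$-supported then $A$ is an optimal (integral, hence also fractional) solution to the configuration LP for the endowed instance $(v^{A_1,\alpha}_1,\ldots,v^{A_n,\alpha}_n)$. In particular, the endowed welfare of $A$ is at least the endowed fractional welfare of any other (fractional) allocation. I will exploit this by plugging in the optimal fractional allocation of the \emph{original} instance as a ``test'' feasible solution.

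Concretely, let $W^*$ denote the maximum fractional welfare of $(v_1,\ldots,v_n)$ and let $(x^*_{i,S})$ be an optimal fractional allocation achieving it. The first key observation is that for every bidder $i$ and every bundle $S$, we have $v^{A_i,\alpha}_i(S) \geq v_i(S)$, since
\[
v^{A_i,\alpha}_i(S) = v_i(S) + (\alpha-1)\cdot v_i(A_i \cap S) \geq v_i(S)
\]
using $\alpha \geq 1$ and nonnegativity of $v_i$. Summing $x^*_{i,S} v^{A_i,\alpha}_i(S)$ over all $i,S$ then shows that $(x^*_{i,S})$ achieves fractional welfare at least $W^*$ in the endowed instance.

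The second ingredient is that the endowed value of $A$ on itself is exactly $\alpha$ times the original value: $v^{A_i,\alpha}_i(A_i) = \alpha \cdot v_i(A_i)$. Combining this with the optimality of $A$ in the endowed LP (Corollary \ref{cor-integralitygap-is-1}) gives
\[
\alpha \sum_i v_i(A_i) \;=\; \sum_i v^{A_i,\alpha}_i(A_i) \;\geq\; \sum_{i,S} x^*_{i,S}\cdot v^{A_i,\alpha}_i(S) \;\geq\; W^*,
\]
so $\sum_i v_i(A_i) \geq W^*/\alpha$, which is exactly the claimed $\alpha$-approximation.

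There is essentially no obstacle here: the proof is a two-line chain of inequalities once one observes that raising $\alpha$ only increases the endowed valuations pointwise, so any fractional allocation feasible for the original LP remains feasible with at-least-as-large value in the endowed LP, while the endowed value of $A$ at its own allocation is exactly $\alpha$ times its original value. The only potential subtlety is to confirm that the endowed valuation indeed dominates the original one, which we verified above from the definition $v^{S_i,\alpha}_i(T) = v_i(T) + (\alpha-1)v_i(S_i \cap T)$.
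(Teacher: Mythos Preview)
Your proof is correct and follows essentially the same argument as the paper: invoke Corollary~\ref{cor-integralitygap-is-1} to get that $A$ is LP-optimal for the endowed instance, then use $v^{A_i,\alpha}_i(A_i)=\alpha\cdot v_i(A_i)$ on the left and $v^{A_i,\alpha}_i(S)\ge v_i(S)$ (from $\alpha\ge 1$) on the right to sandwich the optimal fractional welfare of the original instance. The only cosmetic difference is that the paper writes the chain of inequalities for an arbitrary fractional solution and then specializes to the optimum, whereas you fix the optimal fractional solution from the start.
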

\begin{proof}
Let $\{x_{i,S}\}$ be {some} fractional solution of the LP {of the instance} $(v^{A_1,\alpha}_1,\ldots,v^{A_n,\alpha}_n)$. 
As the allocation $A=(A_1,\ldots, A_n)$ is $\alpha$-supported, by Corollary  \ref{cor-integralitygap-is-1}  and the definition of endowed valuations we have that:
\begin{align*}
\alpha \sum_{i=1}^n v_i(A_i) \geq \sum_{i=1}^n \sum_{S\subseteq M} x_{i,S} \cdot v_i^{A_i,\alpha}(S) \geq \sum_{i=1}^n \sum_{S\subseteq M} x_{i,S} \cdot v_i(S)
\end{align*}
In particular, this holds for the welfare maximizing fractional solution of the instance $(v_1,\ldots, v_n)$, implying that $A$ provides an $\alpha$-approximation to the value of that fractional allocation, as needed.
\end{proof}

As we will see next,
the endowment gap has some interesting and useful connections to the integrality gap. For our first application, recall that Proposition \ref{prop-maximal} shows that an allocation can be supported by some $\alpha$ if and only if it is maximal. We now use the connection to the LP to determine the minimal value of $\alpha$ that can support a maximal allocation.

\begin{claim}\label{claim-endowment-vs-integarlity}
Let $A=(A_1,...,A_n)$ be some allocation. Given a fractional solution $\{x_{i,S}\}$ to the LP, define 
$$\psi_{A, \{x_{i,S}\}} =\sum_{i=1}^n \sum_{S \subseteq M} x_{i,S} \cdot v_i(S \cap A_i)
$$

Suppose that $A$ is supported by $\alpha$. Then,
\begin{enumerate}
\item\label{claim-psi-first-part} For every fractional solution $\{x_{i,S}\}$, 
$  \alpha  \cdot \sum_{i=1}^n v_i(A_i) \geq  \sum_{i=1}^n \sum_{S \subseteq M} x_{i,S} \cdot v_i(S) +(\alpha-1) \psi_{A, \{x_{i,S}\}} 
$.
\item The endowment gap with respect to $A$ equals to  
$$
\sup \bigg\{ \frac{\sum_{i=1}^n \sum_{S \subseteq M} x_{i,S} \cdot v_i(S) - \psi_{A, \{x_{i,S}\}}}{\sum_{i=1}^n v_i(A_i) - \psi_{A, \{x_{i,S}\}}} \bigg|{\{x_{i,S}\}}\  s.t.\  {\sum_{i=1}^n v_i(A_i) - \psi_{A, \{x_{i,S}\}}}>0\bigg\}
$$
\end{enumerate}
\end{claim}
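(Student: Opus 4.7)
The plan is to reduce both parts to Corollary \ref{cor-integralitygap-is-1}, which states that $A$ is $\alpha$-supported if and only if the integral allocation $A$ is an optimal solution to the LP for the endowed valuations $(v^{A_1,\alpha}_1, \ldots, v^{A_n,\alpha}_n)$. I would first record the identity $v^{A_i,\alpha}_i(T) = v_i(T) + (\alpha-1)\,v_i(T \cap A_i)$, which is immediate from the definition of endowed valuation. As a consequence, the endowed welfare of any feasible fractional solution $\{x_{i,S}\}$ equals $\sum_{i,S} x_{i,S}\,v_i(S) + (\alpha-1)\,\psi_{A,\{x_{i,S}\}}$, while the endowed welfare of $A$ itself equals $\alpha\sum_i v_i(A_i)$. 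Part 1 then follows at once: since $A$ is $\alpha$-supported, Corollary \ref{cor-integralitygap-is-1} yields $\alpha\sum_i v_i(A_i) \geq \sum_{i,S} x_{i,S}\,v_i(S) + (\alpha-1)\,\psi_{A,\{x_{i,S}\}}$ for every feasible $\{x_{i,S}\}$, which is exactly the claimed inequality.

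For Part 2 I would rearrange Part 1 as $\alpha\bigl(\sum_i v_i(A_i) - \psi_{A,\{x_{i,S}\}}\bigr) \geq \sum_{i,S} x_{i,S}\,v_i(S) - \psi_{A,\{x_{i,S}\}}$. Monotonicity of each $v_i$ combined with the LP constraint $\sum_S x_{i,S} \leq 1$ shows $\psi_{A,\{x_{i,S}\}} \leq \sum_i v_i(A_i)$, so the coefficient of $\alpha$ is non-negative. Consequently, for every feasible $\{x_{i,S}\}$ with a strictly positive denominator any supporting $\alpha$ dominates the corresponding ratio, and the endowment gap with respect to $A$ is at least the claimed supremum. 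For the reverse inequality I would fix any $\alpha' > \sup$ and verify via Corollary \ref{cor-integralitygap-is-1} that $\alpha'$ also supports $A$, by checking the LP-optimality inequality at $\alpha'$ for every feasible $\{x_{i,S}\}$; whenever the denominator is strictly positive this is immediate from $\alpha' > \sup \geq$ ratio.

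The main obstacle is the degenerate case $\sum_i v_i(A_i) = \psi_{A,\{x_{i,S}\}}$, which is deliberately excluded from the supremum on the right-hand side of the claim. Here the LP-optimality inequality at $\alpha'$ collapses to $\sum_i v_i(A_i) \geq \sum_{i,S} x_{i,S}\,v_i(S)$, independently of $\alpha'$. To settle this case I would invoke the standing hypothesis that $A$ is supported by some $\alpha$, and apply Part 1 to that specific $\alpha$ and the same degenerate $\{x_{i,S}\}$; the $(\alpha-1)$ contributions cancel on substituting $\psi_{A,\{x_{i,S}\}} = \sum_i v_i(A_i)$, yielding precisely the required inequality. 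This shows that every $\alpha' > \sup$ supports $A$, so the endowment gap is at most the supremum, completing the proof.
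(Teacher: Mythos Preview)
Your proposal is correct and follows essentially the same approach as the paper: both reduce to Corollary~\ref{cor-integralitygap-is-1}, use the identity $v_i^{A_i,\alpha}(S)=v_i(S)+(\alpha-1)v_i(S\cap A_i)$ to rewrite the endowed LP objective, and then rearrange to obtain the ratio form. Your handling of the degenerate case $\sum_i v_i(A_i)=\psi_{A,\{x_{i,S}\}}$ via the standing hypothesis that $A$ is supported by some $\alpha$ is slightly more direct than the paper's case analysis (which additionally characterizes the unsupportable case), but the underlying argument is the same.
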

\begin{proof}
%
%
By Corollary \ref{cor-integralitygap-is-1}, $A$ can be $\alpha$-supported if and only if for every fractional solution $\{x_{i,S}\}$:
\begin{align*}
 \sum_{i=1}^n v_i^{A_i, \alpha}(A_i) \geq \sum_{i=1}^n \sum_{S \subseteq M} x_{i,S} \cdot v_i^{A_i,\alpha}(S)
\end{align*}


Additionally, since for every bundle $S$, $v_i^{A_i,\alpha}(S) =  v_i(S) +(\alpha-1) \cdot v_i(S \cap A_i)$, for every fractional solution $\{x_{i,S}\}$ it holds that:
  \begin{align*}
  \sum_{i=1}^n \sum_{S \subseteq M} x_{i,S} \cdot v_i^{A_i,\alpha}(S) = \sum_{i=1}^n \sum_{S \subseteq M} x_{i,S} \cdot   v_i(S) +(\alpha-1) \underbrace{\sum_{i=1}^n \sum_{S \subseteq M} x_{i,S} \cdot v_i(S \cap A_i)}_{\psi_{A, \{x_{i,S}\}}}
  \end{align*}

%
%
Note that the combination of the above two facts already establishes that if $A$ is $\alpha$-supported then claim (\ref{claim-psi-first-part}) holds.

We now continue to prove the second part. Rearranging, we have that $A$ is $\alpha$-supported if and only if for every fractional solution: 
\begin{align}\label{eq-main-end-vs-integ}
  \alpha  \cdot \sum_{i=1}^n v_i(A_i) \geq  \alpha \cdot \psi_{A, \{x_{i,S}\}} + \sum_{i=1}^n \sum_{S \subseteq M} x_{i,S} \cdot v_i(S) - \psi_{A, \{x_{i,S}\}} 
\end{align}
Consider the expression $\lambda_{A, \{x_{i,S}\}}=\alpha  \cdot \sum_{i=1}^n v_i(A_i) -  \alpha \cdot \psi_{A, \{x_{i,S}\}}$. Observe that $\lambda_{A, \{x_{i,S}\}}\geq 0$, simply because $\psi_{A, \{x_{i,S}\}}$ is composed of a sum of linear combinations of subsets of $A_i$, for each player $i$ (and the valuations are monotone). 

If $\lambda_{A, \{x_{i,S}\}}>0$ then obviously there is a large enough value of $\alpha$ such that inequality (\ref{eq-main-end-vs-integ}) holds.
Suppose that $\lambda_{A, \{x_{i,S}\}}=0$. Observe that $\sum_{i=1}^n \sum_{S \subseteq M} x_{i,S} \cdot v_i(S) \geq \psi_{A, \{x_{i,S}\}} $, simply because each term $v_i(S)$ in the LHS is replaced by $v_i(S\cap A_i)$ in the RHS and the valuations are monotone. 

If $\sum_{i=1}^n \sum_{S \subseteq M} x_{i,S} \cdot v_i(S) > \psi_{A, \{x_{i,S}\}} $ and $\lambda_{A, \{x_{i,S}\}}=0$,  which holds for any $A$ that is not maximal,
then no value of $\alpha$ makes inequality (\ref{eq-main-end-vs-integ}) hold and thus this allocation cannot be supported by any $\alpha$. However, if $\sum_{i=1}^n \sum_{S \subseteq M} x_{i,S} \cdot 
v_i(S) = \psi_{A, \{x_{i,S}\}} $ then any value of $\alpha$ makes the inequality hold.

We have thus identified that for an allocation not to be supported by any $\alpha$ it must be that there is some fractional solution $\{x_{i,S}\}$ for which $\lambda_{A, \{x_{i,S}\}}=0$ and $\sum_{i=1}^n \sum_{S \subseteq M} x_{i,S} \cdot v_i(S) > \psi_{A, \{x_{i,S}\}} $. If this is not the case then by rearranging inequality (\ref{eq-main-end-vs-integ}) we can determine the minimal value of $\alpha$ that supports $A$:
$$
\sup \bigg\{ \frac{\sum_{i=1}^n \sum_{S \subseteq M} x_{i,S} \cdot v_i(S) - \psi_{A, \{x_{i,S}\}}}{\sum_{i=1}^n v_i(A_i) - \psi_{A, \{x_{i,S}\}}} \bigg|{\{x_{i,S}\}}\  s.t.\  {\sum_{i=1}^n v_i(A_i) - \psi_{A, \{x_{i,S}\}}}>0\bigg\}
$$
Note that the supremum is bounded, since we assume that $A$ can be supported by some $\alpha$.
%
%
%
%
%
\end{proof}
%

Claim \ref{claim-endowment-vs-integarlity} implies that the endowment gap is at least the integrality gap: take $A$ to be any allocation and $\{x_{i,S}\}$ to be a fractional welfare maximizing solution:
  \begin{align*}
\alpha \geq  \frac{\sum_{i=1}^n \sum_{S \subseteq M} x_{i,S} \cdot v_i(S) - \psi_{A, \{x_{i,S}\}}}{\sum_{i=1}^n v_i(A_i) - \psi_{A, \{x_{i,S}\}}}\geq   \frac{\sum_{i=1}^n \sum_{S \subseteq M}x_{i,S} \cdot v_i(S)}{\sum_{i=1}^n v_i(A_i)} 
\end{align*}
and the right hand side is obviously at least the integrality gap.

However, as we will see in the paper, the integrality gap is usually strictly larger than the endowment gap. We now show that this is generically true for every instance with subadditive valuations with an integrality gap bigger than $1$. \shortversion{The proof can be found in the full version.}
\begin{claim}\label{claim-subadditive}
Consider an instance with two subadditive valuations. Suppose that the integrality gap of this instance is $y>1$. Then, for every small enough $\delta>0$ there is an instance with integrality gap $x=y\cdot \frac{(1+\delta)}{(1+\delta y)}$ in which the endowment gap is {strictly bigger than $x$}.
\end{claim}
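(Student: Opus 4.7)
The plan is to modify the given instance by adding a single new item $g$ whose value is purely additive on top of the original valuations, and then show that this shrinks the integrality gap to exactly $x$ while leaving the endowment gap pinned at $\geq y$. Concretely, let $A^{*}=(A_1^{*},A_2^{*})$ be an integer-optimal allocation of the original instance and set $W^{*}:=v_1(A_1^{*})+v_2(A_2^{*})$, so the LP optimum is $yW^{*}$. Define the new instance on $M\cup\{g\}$ by
\[
\tilde v_i(T)\;:=\;v_i(T\setminus\{g\})\;+\;\beta\cdot\mathbf{1}[g\in T],\qquad \beta\;:=\;\delta\,y\,W^{*}.
\]
Each $\tilde v_i$ is subadditive (as the sum of a subadditive and an additive valuation). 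Since $\tilde v_i$ is additive in $g$, the new integer optimum and LP optimum are $W^{*}+\beta$ and $yW^{*}+\beta$, respectively, yielding integrality gap $\frac{yW^{*}+\beta}{W^{*}+\beta}=\frac{y(1+\delta)}{1+\delta y}=x$, as desired. Note that $y>1$ and $\delta>0$ force $x<y$.

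Now let $\tilde A$ be any allocation in the new instance. If $g$ is unallocated, then $g\in Z$ while both players have positive marginal value $\beta$ for $g$, so $\tilde A$ is not maximal and, by Proposition~\ref{prop-maximal}, cannot be supported by any $\alpha$. Otherwise $g\in\tilde A_i$ for some $i$; assume $i=1$ WLOG and write $A=(A_1,A_2)$ for the restriction of $\tilde A$ to $M$. I will lower-bound the endowment gap with respect to $\tilde A$ via Claim~\ref{claim-endowment-vs-integarlity}(2) applied to the fractional solution $\hat x$ obtained by extending a normalized LP-optimum $\{x^{*}_{i,S}\}$ of the original instance through $\hat x_{1,S\cup\{g\}}:=x^{*}_{1,S}$ and $\hat x_{2,T}:=x^{*}_{2,T}$. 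Because $g$ contributes additively with value $\beta$, a direct computation gives
\[
\sum_{i,S}\hat x_{i,S}\,\tilde v_i(S)=yW^{*}+\beta,\qquad \sum_i \tilde v_i(\tilde A_i)=W(A)+\beta,\qquad \psi_{\tilde A,\hat x}=\psi_{A,x^{*}}+\beta,
\]
where $W(A):=v_1(A_1)+v_2(A_2)$. The two extra copies of $\beta$ cancel in the ratio of Claim~\ref{claim-endowment-vs-integarlity}(2), which therefore equals $\frac{yW^{*}-\psi_{A,x^{*}}}{W(A)-\psi_{A,x^{*}}}$. A one-line rearrangement using $W(A)\le W^{*}$, $\psi_{A,x^{*}}\ge 0$, and $y>1$ shows that this ratio is at least $y$ whenever the denominator is positive; and if the denominator vanishes the numerator $yW^{*}-W(A)$ remains positive, so the ratio is $+\infty$. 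Either way, the endowment gap with respect to $\tilde A$ is at least $y>x$, completing the proof.

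The main obstacle is the bookkeeping that makes the $\beta$'s cancel: the construction is tuned so that the extra item $g$ moves both the integer optimum and the LP optimum up by the exact same $\beta$, contracting the integrality gap from $y$ to $x$, while simultaneously being ``inert'' from the endowment-effect point of view (the $\beta$ terms added to $\sum\tilde v_i(\tilde A_i)$, to the LP value, and to $\psi_{\tilde A,\hat x}$ all coincide and cancel out). Once this cancellation is verified, the inequality ratio $\geq y$ is essentially the same elementary manipulation that underlies the standard observation following Claim~\ref{claim-endowment-vs-integarlity} that the endowment gap is bounded below by the integrality gap.
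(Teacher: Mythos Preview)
Your argument is correct and takes a genuinely different route from the paper's.

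The paper keeps the item set fixed and perturbs the existing valuations by adding $\epsilon|S|$ to every bundle; you instead enlarge the item set by a single additive item $g$. Both tricks shift the integer and fractional optima by the same additive constant, which is exactly what produces the target integrality gap $x=y\frac{1+\delta}{1+\delta y}$. The divergence is in the lower bound on the endowment gap. The paper uses subadditivity explicitly, via $v_i(S\cap A_i)\ge v_i(S)-v_i(S\!-\!A_i)$, together with the two-player fact that $\sum_i\sum_S x_{i,S}v_i(S\!-\!A_i)\le OPT$, to show $\psi\ge (x-1)\cdot OPT$; plugging into Claim~\ref{claim-endowment-vs-integarlity} then yields endowment gap $\ge\frac{1}{2-x}$, and the paper needs the auxiliary result that the two-player subadditive integrality gap is strictly below~$2$ to conclude $\frac{1}{2-x}>x$. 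Your construction is engineered so that the three copies of $\beta$ cancel, reducing the ratio to $\frac{yW^*-\psi}{W(A)-\psi}\ge y$ by a one-line manipulation using only $W(A)\le W^*$, $\psi\ge 0$, and $y>1$. This is more elementary: it never invokes the subadditivity inequality, does not need the two-player integrality-gap bound, and in fact works for every $\delta>0$ rather than only small ones. On the other hand, the paper's bound $\frac{1}{2-x}$ is quantitatively stronger than your bound $y$ (since $(y-1)^2>0$ gives $\frac{1}{2-y}>y$ for $y\in(1,2)$), and the paper's construction stays within the same item set, which matters for the remark following the claim about subclasses closed under sums.
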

\fullversion{
\begin{proof}
Let $(v'_1,v'_2)$ be two subadditive valuations. Denote by $OPT'$ the welfare of an optimal integral solution and by $OPT'^*$ the welfare of an optimal fractional solution $\{x_{i,S}\}$ with respect to $(v'_1,v'_2)$ (so $\frac {OPT'^*} {OPT'}=y$). For each bidder $i$, consider the valuation $v_i(S)=v'_i(S)+|S|\cdot \epsilon$, where $\epsilon=\frac {\delta\cdot OPT'^*} {m}$. Note that $v_i$ is still a subadditive function. For the instance $(v_1,v_2)$, let $OPT$ be the welfare of the optimal integral solution and $OPT^*$ be the welfare of the optimal fractional solution.
Observe that the welfare of any allocation $S$ with respect to $(v_1,v_2)$ is larger than the welfare of that allocation with respect to $(v'_1,v'_2)$ by exactly $\epsilon$ times the number of allocated items in $S$.
Thus, an optimal allocation (fractional or integral) in the instance $(v'_1,v'_2)$ in which all items are allocated is also optimal for $(v_1,v_2)$ and the difference in the welfare is exactly $m\cdot \epsilon$. Therefore, $OPT = OPT' + m \cdot \epsilon$ and $OPT^* = OPT'^* + m \cdot \epsilon$. Let $x$ denote the integrality gap of the instance $(v_1, v_2)$, 
we have that
\begin{align*}
x=\frac {OPT^*} {OPT} = \frac{OPT'^* + m \cdot \epsilon}{OPT' + m \cdot \epsilon} = \frac{(1+\delta)OPT'^*}{OPT'+\delta OPT'^* } = \frac{(1+\delta)OPT'^*}{(1+\delta y) OPT' } = y\frac{(1+\delta)}{(1+\delta y)}
\end{align*}

Let $(A_1,A_2)$ be some allocation that can be supported by $\alpha$ in the instance $(v_1,v_2)$. We claim that $A_1\cup A_2=M$. Else, there is some item $j$ that is not allocated and thus its price is $0$. Observe that given any bundle, item $j$ has a positive marginal value of at least $\epsilon=\frac {\delta\cdot OPT'^*} {m}>0$ for player $1$. Therefore, the bundle $A_1\cup\{j\}$ has a strictly larger profit than his equilibrium allocation $A_1$, a contradiction.

We will show that in the instance $(v_1,v_2)$, for any integral solution $A=(A_1,A_2)$ such that $A_1\cup A_2=M$, $\psi_{A, \{x_{i,S}\}}\geq (x-1)\cdot OPT$. We can then apply Claim \ref{claim-endowment-vs-integarlity} which says that the endowment gap is at least: 
$$
\frac {OPT^*-\psi_{A, \{x_{i,S}\}}} {OPT-\psi_{A, \{x_{i,S}\}}}\geq \frac {x\cdot OPT-(x-1)\cdot OPT} {OPT-(x-1)\cdot OPT}=\frac {1} {2-x} $$
{This completes the proof since $\frac {1} {2-x}>x =y\cdot \frac{(1+\delta)}{(1+\delta y)}$, where we use the fact that for any instance with two subaddititve players the integrality gap is strictly smaller than $2$ (see Appendix \ref{app-subaddititve}).}

We next show that in the instance $(v_1,v_2)$, for any integral solution $A=(A_1,A_2)$ such that $A_1\cup A_2=M$, $\psi_{A, \{x_{i,S}\}}\geq (x-1)\cdot OPT$. 
Observe that by subadditivity $v_i(S\cap A_i) \geq v_i(S)-v_i(S-A_i)$, thus:
\begin{align*}
\psi_{A, \{x_{i,S}\}} =\sum_{i=1}^2 \sum_{S \subseteq M} x_{i,S} \cdot v_i(S \cap A_i)
\geq  \underbrace{\sum_{i=1}^2\sum_{S\subseteq M}x_{i,S}v_i(S)}_{OPT^*=x\cdot OPT} - \sum_{i=1}^2 \sum_{S \subseteq M} x_{i,S} \cdot v_i(S - A_i)
\end{align*}

To complete the proof, we show that $\sum_{i=1}^2 \sum_{S \subseteq M} x_{i,S} \cdot v_i(S - A_i) \leq OPT$.
Observe that since $A_1\cup A_2=M$, in $\sum_{i=1}^2 \sum_{S \subseteq M} x_{i,S} \cdot v_i(S - A_i)$ we only assign player $1$ subsets of $A_2$ and player $2$ subsets of $A_1$. Taking into account that for each player $i$, $\sum_Sx_{i,S}\leq 1$, we get that $\sum_{i=1}^2 \sum_{S \subseteq M} x_{i,S} \cdot v_i(S - A_i)\leq v_1(A_2)+v_2(A_1) \leq OPT$.
\end{proof}
}

The claim provides a generic way of proving lower bounds on the endowment gap of subclasses of subadditive valuations: start with an instance with the maximal integrality gap in the subclass. The claim guarantees that there is an instance with an endowment gap that is strictly higher than the maximal integrality gap. A more careful look at the proof shows that the new instance belongs to the subclass as long as the subclass is closed under sum, like submodular and XOS valuations (a class of valuations $\mathcal V$ is closed under sum if for each $v,u\in\mathcal V$ we also have that $v+u\in \mathcal V$). We note that although the claim guarantees a generic method of proving lower bounds on the endowment gap, in the specific settings we study in this paper we are able to beat these bounds by introducing specific instances with stronger guarantees.

\section{The Main Result: The Endowment Gap of Submodular Valuations is at Most $2$}
\label{sec:submodular}


ֿIn this section we prove our main positive result: the endowment gap for submodular valuations is at most $2$. We prove this by showing that any allocation that is a ''local optimum'' of the social welfare function can be supported for $\alpha=2$ with prices that are equal to the marginal value of the items for the player that receives each item. We start by defining the notion of local optimum.  

\begin{definition}
An allocation $(O_1,\ldots, O_n)$ is a \emph{local optimum} if $\cup_{i=1}^n O_i=M$, and for every pair of players $i$ and $i'$ and item $j\in O_i$ we have that
$v_i(O_i) + v_{i'}(O_{i'}) \geq v_i(O_i-\{j\})+v_{i'}(O_{i'}\cup \{j\}).$

\end{definition}
In other words, in a local optimum {every item is allocated to some player},  and reallocating any single item does not improve the welfare. Note that any welfare maximizing allocation is in particular a local optimum. 
We are now ready to state our main positive result. 
\begin{theorem}
Let $v_1,\ldots, v_n$ be submodular valuations. Let $O=(O_1,\ldots, O_n)$ be a local optimum. Then $O$ is supported by any $\alpha\geq 2$. As an immediate corollary, the endowment gap of every instance with submodular valuations is at most $2$.
\end{theorem}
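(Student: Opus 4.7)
The plan is to verify the Walrasian conditions for $\alpha = 2$ directly, using the ``natural'' pricing suggested by the statement: for each $j \in O_i$, set $p_j = v_i(j \mid O_i - \{j\})$, the marginal contribution of $j$ to the endowed bundle of its owner. Since $O$ is a local optimum every item is allocated, so no unallocated-item price constraint needs checking. Fix a player $i$ and an arbitrary bundle $T$; decompose $T$ as $A \cup B$ with $A = T \cap O_i$, $B = T - O_i$, and write $C = O_i - T$. Using $v_i^{O_i,2}(S) = v_i(S) + v_i(S \cap O_i)$, the profit inequality $v_i^{O_i,2}(O_i) - \sum_{j \in O_i} p_j \ge v_i^{O_i,2}(T) - \sum_{j \in T} p_j$ rearranges to
\[
2 v_i(O_i) - v_i(T) - v_i(A) + \sum_{j \in B} p_j \;\ge\; \sum_{j \in C} p_j.
\]

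The second step handles items in $B$ (those $i$ ``grabs'' from others). Since $\cup_k O_k = M$, every $j \in B$ belongs to some $O_{i'}$ with $i' \ne i$; the local optimality inequality for the pair $(i,i')$ and the item $j$ rearranges to $v_{i'}(j \mid O_{i'} - \{j\}) \ge v_i(j \mid O_i)$, i.e.\ $p_j \ge v_i(j \mid O_i)$. Combining monotonicity and submodularity gives
\[
v_i(T) \;=\; v_i(A \cup B) \;\le\; v_i(O_i \cup B) \;\le\; v_i(O_i) + \sum_{j \in B} v_i(j \mid O_i) \;\le\; v_i(O_i) + \sum_{j \in B} p_j,
\]
so after substituting this upper bound the left-hand side of the target inequality is at least $v_i(O_i) - v_i(A)$.

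The third step handles items in $C$ (those $i$ ``gives up'' relative to $O_i$). Order $C = \{j_1, \dots, j_{|C|}\}$ arbitrarily and telescope:
\[
v_i(O_i) - v_i(A) \;=\; \sum_{k=1}^{|C|} v_i\bigl(j_k \,\big|\, A \cup \{j_1,\dots,j_{k-1}\}\bigr).
\]
For each $k$, the conditioning set is contained in $O_i - \{j_k\}$, so submodularity gives $v_i(j_k \mid A \cup \{j_1,\dots,j_{k-1}\}) \ge v_i(j_k \mid O_i - \{j_k\}) = p_{j_k}$. Summing yields $v_i(O_i) - v_i(A) \ge \sum_{j \in C} p_j$, which closes the argument.

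The main conceptual point, and what I expect to be the subtle part, is the interplay between the two uses of submodularity: on $B$ we need the ``subadditivity of marginals'' to pass to the cheaper prices dictated by other players' marginals via the local-optimality exchange inequality, while on $C$ we need the reverse-direction inequality $v_i(\,\cdot\,|\,\text{small set}) \ge v_i(\,\cdot\,|\,\text{big set})$ to control the telescoped sum by the prices $v_i(j\mid O_i-\{j\})$. The factor of $2$ enters precisely because the endowed valuation contains the term $v_i(T\cap O_i) = v_i(A)$, which at $T=O_i$ doubles the welfare and provides exactly the slack $v_i(O_i)-v_i(A)$ needed to absorb the price $\sum_{j\in C} p_j$ of surrendered items. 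The corollary on the instance level follows immediately since a welfare-maximizing allocation (which always exists) is in particular a local optimum.
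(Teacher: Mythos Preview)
Your proof is correct and uses essentially the same approach as the paper: the same prices $p_j = v_i(j\mid O_i-\{j\})$, the same use of local optimality to get $p_j \ge v_i(j\mid O_i)$ for items outside $O_i$, and the same telescoping-plus-submodularity argument to bound the price of dropped items. The only difference is organizational: the paper factors the inequality through the intermediate bundle $O_i\cup T$ via two separate claims (``no gain from adding'' for $\alpha\ge 1$, ``no gain from dropping'' for $\alpha\ge 2$), whereas you verify the profit inequality for $T$ directly in one pass.
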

\begin{proof}
We explicitly construct prices that show that $O$ is supported by $2$. For each item $j \in O_i$ we define its price to be $p_j=v_i(j|O_i-j)$. Using the following two claims we show that for $\alpha\geq 2$ the prices $(p_1,\ldots, p_m)$  and the allocation $(O_1,\ldots,O_n)$ form an $\alpha$-endowed equilibrium. Later we will observe that our proofs hold even for lower prices.

We start by showing that with these prices and $\alpha\geq 2$, no player can gain by discarding items from his endowment.
\begin{claim} \label{claim-alg-keep}
Consider player $i$ that is allocated bundle $S_i$. Suppose that the price of each item $j\in S_i$ is $p_j= v_i(j|S_i-\{j\})$. Then, if $\alpha\geq 2$ the profit of player $i$ from every bundle $S'$ is at most the profit of $S_i\cup S'$. I.e., $v_i^{S_i,\alpha}(S')-\sum_{j\in S'} p_j\leq v_i^{S_i,\alpha}(S_i \cup S')-\sum_{j\in S_i \cup S'} p_j$.
\end{claim}
\begin{proof}
	We compare the profit of player $i$ from bundle $S'$:
		\begin{align*}
	v_i^{S_i,\alpha}(S')-\sum_{j\in S'} p_j = v_i(S' ) + (\alpha-1) \cdot v_i(S_i\cap S')  -  \sum_{j \in S'} p_j
	\end{align*}
	to his profit from the bundle $S' \cup S_i$:
	\begin{align*}
	v_i^{S_i,\alpha}(S_i \cup S')-\sum_{j\in S' \cup S_i} p_j =v_i(S' \cup S_i )+(\alpha-1) \cdot v_i(S_i) -  \sum_{j \in S' \cup S_i} p_j 
	\end{align*}
	Using the fact that $v_i(S_i) = v_i(S_i- S' |S_i\cap S') +   v_i(S' \cup S_i )$ and rearranging the last expression, we get that the profit of bundle $S' \cup S_i$ equals:
	\begin{align*}
 \underbrace{v_i(S' \cup S_i )+(\alpha-1) \cdot v_i(S_i\cap S')}_{\geq v_i^{S_i,\alpha}(S') \text{~by monotonicity}}  -  \sum_{j \in S'} p_j+(\alpha-1) \cdot v_i(S_i- S' |S_i\cap S') - \sum_{j \in S_i -S'} p_j.
	\end{align*}
Thus, in order to show that $v_i^{S_i,\alpha}(S_i \cup S')-\sum_{j\in S' \cup S_i} p_j \geq v_i^{S_i,\alpha}(S')-\sum_{j\in S'} p_j$, it suffices to show that $(\alpha-1) \cdot v_i(S_i- S' |S_i\cap S') - \sum_{j \in S_i -S'} p_j\geq 0$. 
Since $\alpha\geq 2$, it holds that $\alpha-1\geq 1$ and so to show this it is enough to prove that $ v_i(S_i- S' |S_i\cap S') \geq  \sum_{j \in S_i -S'} p_j$ for every submodular valuation.
Towards this end, denote the items in $S_i -S'$ by ${1,\ldots,|S_i-S'|}$. With this notation we have that 
	$v_i(S_i- S' |S_i\cap S') = \sum_{j=1}^{|S_i-S'|} v_i(j |(S_i\cap S')\cup\{1,\ldots j-1\})$. Finally, observe that by submodularity we have that for every $1 \leq j \leq |S_i-S'|$ it holds that $p_j = v_i(j|S_i -j) \leq v_i(j|S_i \cup S' -\{j, j+1, \ldots ,|S_i-S'|\} ) = 
	v_i(j |(S_i\cap S')\cup\{1,\ldots j-1\})$.
\end{proof}


We next show that with these prices, an agent can never gain by adding items to his endowment. 

\begin{claim}\label{claim-alg-no-add}
Let $O=(O_1,\ldots,O_n)$ be a local optimum and suppose that the price of each item $j\in O_i$ is $p_j = v_i(j|O_i-j)$. Then, for any $\alpha \geq 1$ and bundle $T$ the profit from $O_i$ is at least the profit from $O_i\cup T$. 
\end{claim}
\begin{proof}
Assume without loss of generality that $O_i\cap T = \emptyset$. Observe that since the allocation $O$ is a local optimum we have that for any $j\in T$, $v_i(j|O_i) \leq p_j =v_k(j|O_k-j)$ for player $k$ such that $j\in O_k$. Denote the items in $T$ by $1,\ldots,|T|$. Since the valuations are submodular we have that:
		\begin{align*}
		\alpha \cdot v_i(O_i)+v_i(T-O_i|O_i) - \sum_{j \in O_i\cup T} p_j
		 &=\alpha \cdot v_i(O_i) - \sum_{j \in O_i} p_j + \sum_{j=1}^{|T|} v_i(j|O_i\cup\{1,\ldots,j-1\}) - \sum_{j \in T} p_j  \\
		 &\leq\alpha \cdot v_i(O_i) - \sum_{j \in O_i} p_j + \sum_{j=1}^{|T|} v_i(j|O_i) - \sum_{j \in T} p_j\\
		 &\leq 
		  \alpha\cdot  v_i(O_i) - \sum_{j \in O_i} p_j
		\end{align*}
where in the second-to-last inequality we use the submodularity of $v_i$ to claim that $v_i(j|O_i)\geq v_i(j|O_i\cup\{1,\ldots,j-1\})$.
\end{proof}

We apply claims \ref{claim-alg-keep}  and \ref{claim-alg-no-add} to conclude the proof of the theorem. 
Let $O=(O_1,\ldots,O_n)$ be a local optimum, recall that in a local optimum we have that $\cup_{i=1}^n O_i=M$, and suppose that the price of each item $j\in O_i$ is $p_j = v_i(j|O_i-j)$. To complete the proof we show that any player $i$ demands the set $O_i$ at the these prices. 
By Claim \ref{claim-alg-no-add}, for $\alpha \geq 1$ 
for any bundle $T$:
$$ v^{O_i,\alpha}_i(O_i)-\sum_{j\in O_i }p_{j}\geq v^{O_i,\alpha}_i(O_i\cup T)-\sum_{j\in O_i \cup T}p_{j} $$
and by Claim \ref{claim-alg-keep} for $\alpha\geq 2$ and any bundle $T$:
$$ v_i^{O_i,\alpha}(O_i\cup T)-\sum_{j\in O_i \cup T}p_{j}\geq v_i^{O_i,\alpha}(T)-\sum_{j\in T}p_{j} $$
Combining the two inequalities we get that for any $\alpha\geq 2$ and any bundle $T$: 
$$ v^{O_i,\alpha}_i(O_i)-\sum_{j\in O_i }p_{j}\geq v_i^{O_i,\alpha}(T)-\sum_{j\in T}p_{j} $$
as needed. 
\end{proof}

We note that the proof still holds even if we reduce the prices such that the price of item $j \in O_i$ is 
$\max_{i'\neq i } v_{i'}(j|O_{i'})\leq p_j\leq v_i(j|O_i-\{j\}) $. 
That is, the price $p_j$ can be reduced to the second highest marginal value for the item. The reason is simple: the profit of player $i$ from the bundle $O_i$ has increased at least as any other bundle in this reduction, so Claim \ref{claim-alg-keep} still holds. Claim \ref{claim-alg-no-add} also holds, as for the proof to hold we only need the price of each item to be the second highest marginal value.

\subsection{An Implication: The Approximation Ratio of a Local Maximum}\label{subsec-implication}
One interesting corollary of the algorithm is not directly related to the endowment gap. In previous work it was implicit that the welfare of any local optimum is at least half of the value of the welfare maximizing \emph{integral} solution. By a direct application of Corollary \ref{cor-fractional-approximation}, we are able to strengthen this result and show that the welfare of any local optimum is at least half of the welfare of the welfare maximizing \emph{fractional} solution:
\begin{proposition}
Let $(O_1,\ldots, O_n)$ be a local optimum. Let $\{x_{i,S}\}$ be some fractional solution for the LP presented in Section \ref{sec:int-gap}. Then:
$$
2\cdot \sum_{i=1}^n v_i(O_i)\geq \sum_{i=1}^n \sum_{S \subseteq M}x_{i,S} \cdot v_i(S)
$$
\end{proposition}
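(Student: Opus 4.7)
The plan is to observe that the proposition is essentially a direct combination of the main theorem of Section~\ref{sec:submodular} with Corollary~\ref{cor-fractional-approximation}, so no new technical work is required.

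First, I would invoke the main theorem just proved: since $v_1,\ldots,v_n$ are submodular and $(O_1,\ldots,O_n)$ is a local optimum, the allocation $O$ is supported by $\alpha=2$. This is exactly the setting in which Corollary~\ref{cor-fractional-approximation} applies. In fact, unfolding the proof of that corollary is instructive: from $\alpha$-supportedness and Corollary~\ref{cor-integralitygap-is-1} we know $O$ is optimal for the LP of the endowed instance $(v_1^{O_1,2},\ldots,v_n^{O_n,2})$, hence for every feasible $\{x_{i,S}\}$
\begin{align*}
2\sum_{i=1}^n v_i(O_i) \;=\; \sum_{i=1}^n v_i^{O_i,2}(O_i) \;\geq\; \sum_{i=1}^n \sum_{S\subseteq M} x_{i,S}\cdot v_i^{O_i,2}(S) \;\geq\; \sum_{i=1}^n \sum_{S\subseteq M} x_{i,S}\cdot v_i(S),
\end{align*}
where the final inequality uses that $v_i^{O_i,2}(S) = v_i(S) + v_i(S\cap O_i) \geq v_i(S)$ by monotonicity.

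Since the bound holds for every feasible $\{x_{i,S}\}$, it in particular holds for the specific fractional solution in the statement, completing the proof. There is no real obstacle here: the heavy lifting was done in establishing that a local optimum is $2$-supported (the main theorem), after which Corollary~\ref{cor-fractional-approximation} already packages the desired fractional approximation guarantee. The only conceptual point worth emphasizing is that the bound is against any \emph{fractional} competitor, not merely an integral one, and this strengthening comes for free from the LP-based characterization of endowed equilibria developed in Section~\ref{subsec-integrality}.
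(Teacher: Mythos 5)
Your proof is correct and follows precisely the route the paper intends: invoke the main theorem to get that a local optimum is $2$-supported, then apply Corollary~\ref{cor-fractional-approximation} (whose proof you unfold, matching the paper's own argument line by line). The paper offers no further proof of this proposition, noting only that it is ``a direct application of Corollary~\ref{cor-fractional-approximation},'' which is exactly what you did.
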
	

\subsection{Tightness of Analysis}

The following proposition shows that considering
local optima (or even global ones) will not help us prove a better bound than $2$ on the endowment gap for submodular valuations:
\begin{proposition}
For any $\delta>0$, there is an instance with submodular valuations in which every local optimum maximizes the welfare and these local optima cannot be supported by $\alpha<2-\delta$. In this instance there is another allocation that can be supported by $\alpha=1.5+\delta$.
\end{proposition}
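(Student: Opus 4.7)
The plan is to exhibit an explicit submodular instance, parameterized by $\delta$, and verify three properties: (a) every local optimum is welfare-maximizing, (b) each local optimum has endowment gap at least $2-\delta$, and (c) some other (maximal but non-local-optimum) allocation has endowment gap at most $1.5+\delta$. Both bounds on the endowment gap will be obtained through Claim~\ref{claim-endowment-vs-integarlity}, which converts endowment-gap computations into an LP-ratio calculation.

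The first step is to design the valuations. The instance should be small enough to analyze by enumeration but rich enough that the LP relaxation evaluated at the global optimum $A$ can be driven, via the $\psi$-trick, toward a ratio near $2$. Concretely, I would look for submodular valuations such that from any non-global allocation, some single-item swap strictly improves welfare (guaranteeing property~(a)), while simultaneously there exists a fractional LP solution $\{x_{i,S}\}$ whose bundles significantly overlap with $A$ (so $\psi_{A,\{x_{i,S}\}}$ is a sizable fraction of $\sum_i v_i(A_i)$) and whose total value is close to $\tfrac{3}{2}\sum_i v_i(A_i)$; by Claim~\ref{claim-endowment-vs-integarlity} this sends $(\sum x_{i,S}v_i(S) - \psi)/(\sum v_i(A_i) - \psi)$ toward $2$. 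A natural template uses three or more bidders (since the integrality gap for two submodular bidders is $1$), with carefully tuned unit-demand-like valuations breaking ties in favor of a single global optimum.

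Second, I would verify (a) by enumerating candidate allocations and checking the local-move condition directly; verify (b) by plugging into Claim~\ref{claim-endowment-vs-integarlity} the specific adversarial $\{x_{i,S}\}$ identified in the construction and evaluating the ratio as a function of $\delta$; and verify (c) by exhibiting a maximal non-local-optimum allocation $B$ (necessarily maximal by Proposition~\ref{prop-maximal}) and bounding the supremum in Claim~\ref{claim-endowment-vs-integarlity} for $B$ from above by $1.5+\delta$, either by an exhaustive case analysis over LP solutions or by constructing explicit prices that $(1.5+\delta)$-support $B$.

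The main obstacle is the joint tension between (a) and (b): pushing the LP ratio at $A$ close to $2$ typically requires the LP to find bundles far from $A$ with large value, which tends to create alternative integer allocations that are themselves local optima, violating (a). Reconciling this requires a delicate calibration — asymmetric valuations where the LP's fractional advantage vanishes under any single-item integer perturbation, so that no non-global integer allocation survives the local-move test. Verifying that no other allocation is a local optimum, and that the chosen $B$ strictly beats the $1.5+\delta$ threshold, is the most tedious part of the proof.
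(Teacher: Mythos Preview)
Your proposal is a strategy outline, not a proof: you describe the properties the instance should have without actually constructing it, and you explicitly flag the ``main obstacle'' without resolving it. That is the central gap. The proposition is a construction result, and the entire content lies in exhibiting the instance and verifying the three properties; a description of what such an instance would need to look like is not a substitute.

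There is also a factual slip: the integrality gap for two submodular bidders is \emph{not} $1$ --- the Feige--Vondr\'ak example used later in the paper has two bidders and integrality gap $6/5$. So your stated rationale for needing three or more bidders is off (though the paper's construction does in fact use four bidders, for different reasons).

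Methodologically, the paper differs from your plan in two ways. For part~(b), the paper does \emph{not} invoke Claim~\ref{claim-endowment-vs-integarlity} or any fractional LP solution. Instead it argues directly with prices: from the structure of the (essentially unique) local optimum it deduces upper bounds on certain item prices, then writes down the equilibrium inequalities that each bidder's demand set must satisfy, and sums them to obtain a lower bound on $\alpha$ that tends to $2$ as the number of items grows. This is more elementary than your LP-ratio route and sidesteps the need to exhibit an adversarial fractional solution with the right $\psi$. For part~(c), the paper simply writes down explicit prices that $(1.5+\delta)$-support the alternative allocation, rather than bounding the supremum in Claim~\ref{claim-endowment-vs-integarlity}. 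The actual instance has four bidders and $2k+1$ items: two unit-demand bidders and two bidders whose valuations are a budget-additive function plus a small linear term, parameterized so the local optimum is forced and the price constraints are easy to read off.
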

\begin{proof}
Consider an instance with four submodular players ($a_1, a_2, b_1, b_2$) and $2k+1$ items ($k\geq 2$). Each of the items belongs to one of the following three sets: $X,Y, \{c\}$, 
where $X=\{x_1,...,x_k\}$ and $Y=\{y_1,...,y_k\}$. 
The valuations of the players are as follows: 
\begin{itemize}
	\item $a_1$ has a unit demand valuation with value $\frac 1 k$ for each of the items in $X$ (and $0$ for the rest).
	\item $a_2$ has a unit demand valuation with value $\frac 1 k$ for each of the items in $Y$ (and $0$ for the rest).
	\item   The valuations of $b_1,b_2$ are defined as follows: given a set $T$, let $v_T$ be the budget additive valuation with budget $1$ that gives value $\frac 1 k$ for every item in $T$ and value of $1$ for item $\{c\}$. Then the valuation of $b_1$ is $b_1(S)=v_X(S)+|S|\cdot \epsilon$ and $b_2(S)=v_Y(S)+|S|\cdot \epsilon$,
	where $\epsilon>0$ is small enough.  
\end{itemize}
Observe that the valuations of $b_1$ and $b_2$ are submodular as they are the sum of two submodular valuations.

We next show that up to symmetry, there is only one locally optimal allocation. 
In every local maximum item $c$ must be allocated to either $b_1$ or $b_2$ as its marginal value for both is positive given any bundle, and $a_1,a_2$ have a value of $0$ for item $c$. Without loss of generality assume it is allocated to $b_1$. Now, it is not possible that in a local optimum $b_1$ is allocated $X\cup \{c\}$ as his marginal value for each item in $X$ is only $\epsilon$, while $a_1$ values each item in $X$ at $1/k$. 
Thus, for $\epsilon<1/k$ player $a_1$ must receive an item from $X$, {without loss of generality} item $x_1$. Now, given any subset of $X-\{x_1\}$, $b_1$ has positive value for any additional item in $X-\{x_1\}$, and is the only {player} with such positive value, so he must get $X\cup \{c\}-\{x_1\}$. Finally, {all items in} $Y$ must be allocated to $b_2$ as the marginal value of any item in $Y$ (given any subset of $Y$) is larger for $b_2$ than for any other player, in particular $a_2$. 

We conclude that {in a local maximum, without loss of generality,} $a_1$ is allocated $\{x_1\}$, $b_1$ is allocated $X\cup \{c\}-\{x_1\}$ and $b_2$ is allocated $Y$.

Fix some value of $\alpha$ that supports this allocation. First, we observe that for any $j\in X-\{x_1\}$, $b_1(j|X\cup \{c\}-\{x_1,j\})=\epsilon$, thus $p_j\leq \alpha\cdot \epsilon$ (otherwise the profit of $b_1$ from the bundle $X\cup \{c\}-\{x_1\}-\{j\}$ is bigger than the profit of his equilibrium allocation). Similarly, $p_c\leq \alpha \cdot( \frac {1} k+\epsilon)$.

Then, it must be that:
\begin{enumerate}
\item $b_2$ prefers his allocation over item $c$: $(1+k\cdot \epsilon)\cdot \alpha-\sum_{y\in Y}p_{y}\geq 1+\epsilon-p_c$. 
\item $a_2$ that has zero profit in equilibrium has a non-positive profit from items in $Y$: for every $y\in Y$, $0\geq \frac 1 k-p_{y}$.
\end{enumerate}
Summing up all these inequalities with $p_c\leq \alpha \cdot( \frac {1} k+\epsilon) $, we get that: $ֿ\alpha\cdot (1+k\cdot \epsilon)\geq 2+\epsilon-\alpha\cdot (\frac 1 k+\epsilon)$. That is, $\alpha\geq \frac {2+\epsilon} {1+k\cdot \epsilon+(\frac 1 k +\epsilon)}$, which approaches $2$ for $k$ that approaches $\infty$ and a choice of $\epsilon=\frac 1 {k^2}$.

As for the second part of the proposition, we will now see that the allocation in which $a_1$ is allocated $\{x_1\}$, $b_1$ is allocated $X-\{x_1\}\cup \{c\}$, $b_2$ is allocated $Y-\{y_1\}$ and $a_2$ is allocated $\{y_1\}$ is supported by $1.5+\epsilon$. To see this, we simply note that with this value of $\alpha$ the following prices constitute an equilibrium with respect to this allocation: $p_c=\frac 1 k+\epsilon$, $p_x=\epsilon$ (for each $x\in X$), $p_y=\frac 1 {2k}+\epsilon$ (for each $y\in Y-\{y_1\}$), and $p_{y_1}=\frac 1 {k}+\epsilon$.
\end{proof}


\subsection{The Complexity of Finding a Local Optimum}
Our proof that the endowment gap for submodular valuations is at most $2$ starts with a local maximum (local optimum) $(O_1,\ldots, O_n)$. 
Can such a local optimum be efficiently found? We next show that there are both communication and computation hurdles in finding a local optimum:
We first show that finding a local optimum in combinatorial auctions with submodular valuations requires exponential number of value queries. In addition, we show that there exists some family of succinctly represented submodular valuations for which finding a local optimum is PLS hard.\footnote{We will not give a precise definition here (see \cite{johnson1988easy} for a formal definition), but the PLS class intuitively captures problems which admit a local search algorithm. An example for a PLS complete problem is finding a pure Nash equilibrium in congestions games \cite{fabrikant2004complexity}.} Both results hold even if there are only two players.

In light of the hardness of finding exact local optimum, one might be tempted to use an approximate local optimum instead, as it is computationally feasible to find, and hope it can be supported by a small $\alpha$. However, it is easy to provide examples in which a $(1-\epsilon)$-approximate local optimum cannot be supported by any $\alpha$. Consider a setting with two additive bidders and two items. The first additive bidder has a value of $1-\epsilon$ for item $a$, and $\epsilon$ for item $b$. The second bidder has a value of $0$ for both items. Observe that allocating $a$ to the first bidder and $b$ to the second one is an $(1-\epsilon)$-local optimum. However, this allocation cannot be supported by any $\alpha$: for the bundle $\{b\}$ to be in the demand of the second player, the price of item $b$ must be $0$. However, the profit of the bundle $\{ab\}$ for player $1$ is then strictly bigger than that of $\{a\}$.
We thus leave as an open question the problem of efficiently computing an $\alpha$-endowed equilibrium for combinatorial auctions with submodular bidders for a small value of $\alpha$ (\fullversion{Proposition \ref{prop-exists-alpha} shows}\shortversion{in the full version we show} that it is possible to efficiently compute an allocation that is supported by some huge $\alpha\geq 1$).

We now turn to proving the hardness results for finding a local optimum\shortversion{ (proof in the full version)}:
\begin{proposition}\label{claim-value-hardness}
Finding a local optimum in a combinatorial auction with two submodular valuations requires exponentially many value queries, even with randomized algorithms.
\end{proposition}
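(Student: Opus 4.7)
The strategy is the standard value-query lower bound via Yao's minimax principle. I will construct a distribution over two-bidder submodular instances indexed by a hidden combinatorial object $A$, argue that any local optimum ``reveals'' $A$, and then show that learning enough about $A$ from value queries requires exponentially many queries.

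First, I will build a family of instances $\{(v_1^A, v_2^A)\}_{A\in\mathcal{F}}$ with $\mathcal{F}$ of size $2^{\Omega(m)}$ (e.g., balanced bipartitions of $M$ into $A$ and $\bar{A}$). Both valuations will be monotone submodular, obtained as concave-of-modular functions or partition-matroid-rank functions on the hidden split $(A,\bar{A})$, together with a tiny additive tie-breaking perturbation. The design will be asymmetric between the two bidders, so that a single-item exchange argument (using the marginals of the concave/rank components) forces every local optimum of $(v_1^A, v_2^A)$ to be of the form $(A, \bar{A})$ up to symmetry. In particular, local-optimum sets for distinct $A, A' \in \mathcal{F}$ will be disjoint, so outputting any local optimum uniquely identifies $A$.

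Second, I will introduce a ``canonical'' valuation profile $(v_1^*, v_2^*)$, independent of $A$, that coincides with $(v_1^A, v_2^A)$ on every query $S$ outside a small ``revealing set'' $R_A \subseteq 2^M$; this is achieved by arranging the $A$-dependence to appear only for queries with specific intersection profiles with $(A,\bar{A})$ (e.g., queries whose intersection with $A$ lies in a narrow tail of the hypergeometric distribution). A standard concentration computation shows that for every fixed $S$, $\Pr_{A\sim \mathcal{F}}[S\in R_A] \le 2^{-\Omega(m)}$. By a union bound over any deterministic algorithm's $q$ adaptive queries, the probability that the transcript ever differs from the transcript on $(v_1^*, v_2^*)$ is at most $q\cdot 2^{-\Omega(m)}$. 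If $q=2^{o(m)}$, then with probability $1-o(1)$ the algorithm's output is a fixed allocation independent of $A$, which can be a local optimum for at most one $A \in \mathcal{F}$; invoking Yao's principle transfers this to randomized algorithms.

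The main obstacle is the joint design in the first step: the valuations must be (i) submodular, (ii) asymmetric enough that local optima pin down $A$ via a single-item-exchange case analysis, yet (iii) $A$-dependent only on an exponentially small set of queries so that indistinguishability holds. Submodularity is delicate because the ``revealing'' contribution has to be added without destroying decreasing marginals; the standard remedy is to inject $A$-dependence as the submodular closure of a tiny perturbation that is invisible to queries far from the ``boundary'' $|S\cap A|\approx |S|/2$, and to verify the marginal-based exchange inequalities directly for each move $j\in O_i\to O_{i'}$.
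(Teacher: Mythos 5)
The paper does not prove this directly; it reduces the problem to finding a local maximum of a labeled \emph{odd graph}, matches each combinatorial auction local optimum with an odd-graph local maximum, and then invokes the known $\exp(m)$ query lower bound of Santha and Szegedy \cite{santha2004quantum} (which also covers randomized algorithms). Your proposal instead attempts a direct Yao/indistinguishability argument: hide a balanced partition $A$, make the local optimum $(A,\bar A)$, and show that a small number of value queries cannot distinguish $A$ from a canonical baseline.

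There is a genuine gap in the proposal, and it is structural rather than cosmetic. Local search is a \emph{total} search problem: every instance, in particular your canonical instance $(v_1^*,v_2^*)$, already has some local optimum $(T_1,T_2)$. Being a local optimum is witnessed by the $O(m)$ values $v_i(T_i)$, $v_i(T_i-\{j\})$, $v_i(T_i\cup\{j\})$. Your indistinguishability requirement says that for each \emph{fixed} query $S$, $\Pr_{A}[S\in R_A]\le 2^{-\Omega(m)}$. Applying this to the $O(m)$ queries that witness $(T_1,T_2)$ being a canonical local optimum, a union bound shows that for a $1-o(1)$ fraction of $A$, none of these queries lies in $R_A$, hence $(T_1,T_2)$ remains a local optimum of $(v_1^A,v_2^A)$. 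This directly contradicts your requirement (needed for the final counting step) that the local optima of $(v_1^A,v_2^A)$ are exactly the $A$-dependent allocations and that these sets are pairwise disjoint across $A\in\mathcal F$. So the two properties you need — query-level concentration, and local optima that pin down $A$ — are incompatible. This is exactly why the classical lower bounds for query local search (Aldous, Aaronson, Santha--Szegedy) do \emph{not} use a one-shot hidden-partition argument; they hide an exponentially long random \emph{path} in the graph, and argue that no short query sequence can trace it. The paper's route — encode the labeled odd graph inside the parametrized submodular family and cite Santha--Szegedy — imports precisely this hard-won structure instead of trying to reconstruct it. If you want to salvage a self-contained argument, you would need to replace the hidden partition by a hidden staircase/random-walk structure and redo the (nontrivial) path-hiding analysis; the plain concentration-plus-union-bound sketch will not go through.
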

\fullversion{
\begin{proof}
	Let the number of items be $m=2k+1$ for some integer $k\geq 1$. 
	In the proof the valuation of each player $i$ will belong to the following family of valuations parametrized by $c{^i_S}$ satisfying $0\leq c^i_S< \frac 1 2$ for every set $S$ {with} $|S|=k+1$:
	
\begin{align*} 
	v_i(S)=\left\{
	\begin{array}{ll}
	|S|, & |S|\leq k\\
	k+\frac 1 2+c{^i_S}, & |S|=k+1 \\
	k+1, & |S|\geq k+2.
	\end{array}
	\right.
\end{align*}
Notice that $v_i$ is a monotone submodular valuation.
	
	Given two submodular valuations  $v_1$ and $v_2$ from the family above, what can we say about their local optimum? Consider an allocation $(S,M-S)$. If the allocation is a local optimum, then clearly $|S|=k$ or $|M-S|=k$ as otherwise reallocating a single item from the larger bundle decreases the value of the large bundle by at most $\frac 1 2 -c_S$ and increases the value of the small bundle by $1$, so the overall welfare has strictly increased.
	Thus, only allocations $(S,M-S)$ where one of the bundles has size $k$ can potentially be a local optimum. 
	
Recall that a local maximum of a labeled graph is a vertex whose label is at least as large as the labels of its neighbors. For our reduction we use odd graphs. An odd graph is constructed\footnote{The textbook definition of an odd graph starts with a set $M$ of size $2k+1$ and associates each of the vertices with 
	a set $S \subseteq M$ of size $k$. 
	Two vertices $v_S$ and $v_{S'}$ are connected if and only if $S\cap S'=\emptyset$. The definition we give above is easier to work with. Note that that this definition and the one we use above are equivalent as can be seen by changing the label from $S$ in the textbook definition to $M-S$.} by starting with a set $M$ of size $2k+1$ and associating each of the vertices with 
a set $S \subseteq M$ of size $k+1$. Two vertices $v_S$ and $v_{S'}$ are connected if and only if $|S\cap S'|=1$.

We will show that finding a local optimum when the valuations belong to the family defined above is equivalent to finding a local maximum of an odd graph. This is enough to complete the proof as the results of Santha and Szegedy \cite{santha2004quantum} imply that the number of queries needed to find a local optimum in an odd graph is $exp(m)$. This result is also holds for randomized algorithms (and in fact also for quantum algorithms).

Given a labeled odd graph with $M$ of size $2k+1$, we reduce it to combinatorial auctions with submodular valuations as follows.	
Assume without loss of generality that the labels on the odd graph are less than $\frac{1}{2}$ (this can be achieved, e.g., by dividing all numbers by a large enough number). We define identical valuations for the two players to be the above parametrized valuations where the parameter $c_S=c^1_S=c^2_S$ is equal to the label of the unique vertex in the odd graph that is associated with $S$, for each bundle $S$ with $|S|=k+1$. We next show that local maxima in this combinatorial auction correspond to local maxima in the odd graph.

Consider an allocation $(S,M-S)$ with $|S|=k+1$. The welfare of this allocation is $v_1(S)+v_2(M-S)=k+\frac 1 2+c{^i_S}+k=2k+\frac 1 2  {+}  c_S$. Similarly, if $|M-S|=k+1$ then the welfare is $2k+\frac 1 2 {+} c_{M-S}$. Recall that
an allocation is a local maximum if and only if moving item $j$ to the other player does not improve the welfare. Thus, $2k+\frac 1 2 {+}  c_{S}\geq 2k+\frac 1 2  {+} c_{(M-S)\cup \{j\}}$ for every $j\in S$. In other words, $c_{S}\geq  c_{(M-S)\cup \{j\}}$ for every $j\in S$. Recall that in the odd graph the vertex that corresponds to $S$ is connected exactly to all vertices which correspond to $(M-S)\cup\{j\}$ for every $j\in S$. This immediately implies that an allocation $(S,M-S)$ ($|S|=k+1$) in the combinatorial auction is a local maximum if and only if the vertex that corresponds to $S$ is a local maximum in the odd graph. The proof for allocations $(S,M-S)$ where $|M-S|=k+1$ is symmetric.
\end{proof}
} 

The proposition proves that finding a local maximum requires exponentially many value queries. The other common model for accessing the valuations is the more general communication model. That is, how many bits do the players need to exchange in order to compute a local optimum? One obstacle in proving hardness in the communication model is that the proof of Proposition \ref{claim-value-hardness} is based on the result of \cite{santha2004quantum} which proves the hardness of finding a local maximum on the odd graph using value queries. An analogous result for the communication model was not known when the first version of the paper was written.

However, inspired by our work, the paper \cite{BDN18} studies a communication variant of local search on the odd graph. Using this result and a very similar reduction to that of Proposition \ref{claim-value-hardness}, we show:

\begin{proposition}\label{prop-comm-hardness}
The communication complexity of finding a local optimum in a combinatorial auction with two submodular valuations is $exp(m)$. This results holds even for randomized protocols.
\end{proposition}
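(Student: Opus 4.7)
The plan is to follow the template of the proof of Proposition~\ref{claim-value-hardness}, reducing from local search on the odd graph, but this time starting from the communication lower bound recently established in \cite{BDN18} in place of the value-query lower bound of \cite{santha2004quantum}. The main adjustment is in how the labels of the odd graph are split between the two players, so that each player's valuation is computable from their private input alone.

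Concretely, I would recall the parametric family from Proposition~\ref{claim-value-hardness}: for each set $S\subseteq M$ of size $k+1$ one picks a parameter $c_S\in[0,1/2)$, and sets $v_i(T)=|T|$ for $|T|\le k$, $v_i(T)=k+\frac{1}{2}+c_T$ for $|T|=k+1$, and $v_i(T)=k+1$ for $|T|\ge k+2$. The proof of Proposition~\ref{claim-value-hardness} shows that any local optimum must split $M$ into bundles of sizes $(k,k+1)$, that its welfare equals $2k+\frac{1}{2}$ plus the parameter of the size-$(k+1)$ bundle, and that moving a single item across the cut corresponds to traversing an edge of the odd graph. For the communication setting I would replace the global parameters by per-player parameters $c^i_S$, where $c^i_S$ is determined exclusively by player $i$'s part of the input to the odd-graph problem of \cite{BDN18}, so that player $i$ can evaluate $v_i$ locally without any communication. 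The analysis of Proposition~\ref{claim-value-hardness} then carries over and yields a correspondence between local optima of the combinatorial auction and locally maximal vertices of the odd graph under the combined labeling used in \cite{BDN18}. Consequently any communication protocol that finds a local optimum of the two-player auction yields a protocol of the same cost for the odd-graph communication local search problem, and the $\exp(m)$ lower bound of \cite{BDN18} transfers directly, including the extension to randomized protocols.

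The main obstacle is lining up the combining rule implicit in welfare maximization with the combining rule used in the hardness result of \cite{BDN18}: the welfare difference between two allocations that differ in a single item must equal, up to a common additive constant, the difference of the combined odd-graph labels at the two corresponding adjacent vertices. This may require a small tweak to the valuation family (for instance, shifting some contribution between size-$k$ and size-$(k+1)$ levels) to ensure that both $c^1$ and $c^2$ enter the welfare expression at the right allocations, while preserving the monotone submodular shape of each $v_i$. Once the private-input split is aligned with the split in \cite{BDN18}, the $(k,k+1)$-size characterization of local optima and the bijection with odd-graph local maxima are inherited from Proposition~\ref{claim-value-hardness} with no further work beyond bookkeeping.
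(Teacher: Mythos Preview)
Your proposal is correct and follows essentially the same approach as the paper. The paper's sketch implements exactly the ``small tweak'' you anticipate: it lets the size-$k$ level carry the label of the complementary size-$(k+1)$ bundle, setting $v_i(S)=k-1+\tfrac{1}{2}+c^i_{M-S}$ for $|S|=k$ and $v_i(S)=k-1+\tfrac{3}{4}+c^i_S$ for $|S|=k+1$ (with $c^i_S\le \tfrac{1}{4}$ to preserve monotone submodularity), so that the welfare of any $(k,k+1)$ split $(S,M-S)$ is $2k-2+\tfrac{5}{4}+c^A_S+c^B_S$ regardless of which player holds the larger bundle, and the rest of the argument is identical to Proposition~\ref{claim-value-hardness}.
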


We give a sketch of the proof in \fullversion{Appendix \ref{appendix-comm-hardness}}\shortversion{the full version}.

\begin{proposition}
There exists a family of succinctly described submodular functions for which computing a value query can be done in polynomial time but finding a local optimum in a combinatorial auction with two valuations that belong to this family is PLS-hard.
\end{proposition}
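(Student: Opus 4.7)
The plan is to give a tight PLS-reduction from weighted local max-cut, which is PLS-complete (Sch\"affer and Yannakakis). In weighted local max-cut the input is an edge-weighted graph $G=(V,E,w)$ with $w:E\to\mathbb{Z}_+$ encoded in binary, and a feasible solution is a partition $(S,V\setminus S)$; it is a local optimum if moving any single vertex across the cut does not increase $\sum_{e\in E(S,V\setminus S)} w(e)$. Given such an instance, I would build a combinatorial auction with item set $M=V$ and two bidders sharing the same valuation
\[
v_1(T)=v_2(T)=f(T):=\sum_{e\in E(T,\,V\setminus T)} w(e).
\]
The cut function $f$ is a textbook submodular function, and the graph (with its binary-encoded weights) is a polynomial-size succinct description from which a value query $f(T)$ can be answered in polynomial time by iterating over the edges incident to $T$. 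This supplies the required family of succinctly represented submodular functions.

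Next I would verify that local optima of the auction and of the max-cut instance coincide. Any local optimum $(O_1,O_2)$ of the auction satisfies $O_1\cup O_2=M$ by definition, so $O_2=V\setminus O_1$. Since $f$ is symmetric, $f(O_1)+f(O_2)=2f(O_1)$, and reallocating an item $j\in O_1$ to player $2$ produces allocation $(O_1\setminus\{j\},\,V\setminus(O_1\setminus\{j\}))$ of welfare $2f(O_1\setminus\{j\})$. Hence the local-optimum inequality of the auction reduces, after dividing by $2$, to $f(O_1)\geq f(O_1\setminus\{j\})$ for every $j\in O_1$, together with the symmetric condition for $j\in O_2$. This is exactly the definition of a locally optimal cut in $G$. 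Conversely, every locally optimal cut $(S,V\setminus S)$ yields a local optimum of the auction via the same identification. Both the instance map and the solution map are computable in linear time, so this is a valid PLS-reduction and PLS-hardness transfers.

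The main (and only) delicate point is the bookkeeping required for a formal PLS-reduction: one must check that our search problem is itself in PLS, so that ``PLS-hard'' is the correct complexity claim. This is routine -- the welfare is an integer bounded by $2\cdot w(E)$, which is polynomial in the input size; the neighborhood of single-item reallocations has polynomial size; and comparing the welfare of an allocation to each of its neighbors takes polynomial time. Beyond this, I do not expect any genuine obstacle: the cut-function identification is transparent, and a polynomial-time algorithm for finding a local optimum in our two-bidder submodular auction would yield one for weighted local max-cut, contradicting its PLS-completeness.
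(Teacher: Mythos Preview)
Your reduction has the right shape but uses the wrong function. In this paper (and in the standard combinatorial-auctions model), valuations are required to be \emph{monotone}; see the model section, where each $v_i$ is assumed non-decreasing. The cut function $f(T)=\sum_{e\in E(T,V\setminus T)}w(e)$ is submodular but certainly not monotone --- for instance $f(V)=0$ while $f(V\setminus\{j\})>0$ whenever $j$ has an incident edge. So your $v_1=v_2=f$ are not valid valuations in the setting of the proposition, and the reduction does not go through as stated.

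The paper fixes exactly this by using, for each set $S$, the \emph{coverage} function $v(S)=\sum_{e:\,e\cap S\neq\emptyset}w(e)$, which is monotone and submodular. Then for an allocation $(S,M\setminus S)$ the welfare $v(S)+v(M\setminus S)$ equals the total edge weight plus the weight of the cut $(S,V\setminus S)$, so local optima of the auction correspond to locally optimal cuts just as in your argument. Everything else you wrote (succinct description, polynomial-time value queries, the PLS bookkeeping) carries over unchanged once you replace the cut function by the coverage function.
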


\begin{proof}
We reduce from the PLS complete problem of finding a locally optimal cut in a graph $G$. In this problem, we are given an edge-weighted graph $G=(V,E)$, and we are looking for a partition of the vertices $(S,V-S)$ such that the (weighted) cut cannot be improved by moving any single vertex from one side of the cut to the other.

We now reduce this problem to a combinatorial auction with two identical submodular valuations. The items in the combinatorial auction are the vertices of the graph. For each set of vertices $S$, we set $v(S)$ to be the sum of the weights of all edges that touch at least one vertex in $S$. It is not hard to see that this valuation is monotone and submodular.

Consider an allocation $(S,M-S)$. Observe that the welfare of $(S,M-S)$ is exactly the sum of the weights of all edges in the graph plus the sum of edges in the cut: the weight of an edge $e=(v,u)$ is counted once if and only if both are in $S$ or if both are in $M-S$. It is counted twice if and only if $v\in S$ or $u\in M-S$ or  $u\in S$ or $v\in M-S$. This implies that $(S,M-S)$ is a local optimum in the combinatorial auction if and only if $(S,V-S)$ is a locally optimal cut in the graph, as needed.
\end{proof}

\section{Lower Bounds on the Endowment Gap}

We now prove some lower bounds on the endowment gap. In Section \ref{sec:submodular} we have shown that the endowment gap for submodular valuations is at most $2$. What about larger classes of valuations, such as XOS or subadditive valuations? In sharp contrast to the case of submodular valuations, we show that with the larger classes, the endowment gap cannot be bounded uniformly for the entire class, even if there are only two players. 
Note that this does not contradict our claim \fullversion{(Proposition \ref{prop-exists-alpha})} that for any instance, there exists an allocation that is supported by \emph{some} $\alpha$, as we now only show that for every \emph{fixed} $\alpha$ there is some instance such that no allocation is supported by that value of $\alpha$.

\begin{proposition}\label{prop-XOS}
Fix $\alpha>1$. There exists an instance that consists of only three identical items and two players with identical XOS valuations such that no allocation is supported by $\alpha$.
\end{proposition}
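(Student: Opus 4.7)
The plan is to exhibit a family of XOS instances parameterized by a small $\epsilon > 0$ whose endowment gap grows like $1/\epsilon$; given the fixed $\alpha > 1$, choosing $\epsilon$ small enough then forces the gap above $\alpha$. Concretely I would take the symmetric valuation on $M = \{1,2,3\}$ defined by $v(\emptyset)=0$, $v(S)=1$ for $|S|=1$, $v(S)=1+\epsilon^2$ for $|S|=2$, and $v(M)=1+\epsilon^2+\epsilon$, and give both bidders this $v$. Membership in XOS follows from the characterization for symmetric valuations on identical items (namely that $v(k)/k$ must be non-increasing in $k$), which here reduces to $2\epsilon \le 1+\epsilon^2$ and holds for all $\epsilon$; alternatively one can exhibit $v$ explicitly as the pointwise max of the three singleton-indicator additive valuations together with uniform additive valuations of total mass $1+\epsilon^2$ on each pair and $1+\epsilon^2+\epsilon$ on $M$.

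Since every marginal of an item is strictly positive (equal to $1$, $\epsilon^2$, or $\epsilon$ depending on the size of the base set), any allocation leaving an item unassigned would, at that item's forced price~$0$, give some bidder a strictly profitable deviation; so I may restrict attention to allocations that cover $M$, which up to bidder symmetry leaves only $(3,0)$ and $(2,1)$. For $(3,0)$, the empty bidder's profit-maximization forces $p_j \ge v(\{j\}) = 1$ for each $j$, while the endowed bidder's no-drop condition on $M\setminus\{j\}$ forces $p_j \le \alpha'\bigl(v(M)-v(M\setminus\{j\})\bigr) = \alpha'\epsilon$, so $\alpha' \ge 1/\epsilon$. For $(2,1)$ with $A_1=\{1,2\}$ and $A_2=\{3\}$, bidder~$1$'s no-drop conditions on $\{1\}$ and $\{2\}$ force $p_1, p_2 \le \alpha'\bigl(v(2)-v(1)\bigr) = \alpha'\epsilon^2$, while bidder~$2$'s no-grab condition on $M$ forces $p_1+p_2 \ge v(M)-v(\{3\}) = \epsilon^2+\epsilon$; hence $\alpha' \ge (\epsilon^2+\epsilon)/(2\epsilon^2) > 1/(2\epsilon)$. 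Choosing $\epsilon < 1/(2\alpha)$ then makes both lower bounds strictly exceed $\alpha$, so no allocation is $\alpha$-supported.

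The main obstacle is picking the scaling so that both allocation types are simultaneously hard: more symmetric choices---for instance $v \equiv 1$ on nonempty sets, or any $v$ with $v(1)=v(2)$---allow a partial allocation like $(1,1)$ to be supported at a small constant $\alpha'$, capping the gap at a constant. The asymmetric scaling $v(2)-v(1) = \epsilon^2 \ll v(3)-v(2) = \epsilon$ balances the $(3,0)$ obstruction $1/(v(3)-v(2))$ against the $(2,1)$ obstruction $\Theta\bigl((v(3)-v(1))/(v(2)-v(1))\bigr)$, both at $\Theta(1/\epsilon)$, and it exploits exactly the slack that XOS permits beyond submodular (where successive marginals would have to be non-increasing).
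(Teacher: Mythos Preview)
Your proof is correct and takes essentially the same approach as the paper: the same two-bidder, three-identical-item setup with a strictly increasing symmetric XOS valuation, the same reduction to the allocations $(3,0)$ and $(2,1)$ via strict positivity of all marginals, and the same pair of price-bound contradictions in each case. The only cosmetic difference is parameterization---the paper builds the valuation directly in terms of the target $\alpha$ (taking $v(2)=1+\tfrac{1}{12\alpha^2}$, $v(3)=1+\tfrac{1}{3\alpha}$), whereas you introduce a free $\epsilon$ and choose $\epsilon<1/(2\alpha)$ at the end; the resulting inequalities and the logic are the same.
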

\begin{proof}	
Consider an instance with two bidders and three identical items $x_1,x_2,x_3$. The valuations of the bidders are identical (since the items are identical we slightly abuse notation and use $v(x)$ to specify the value of every bundle $S$ such that $|S|=x$): $v(1)=1, v(2)=1+\frac 1 {12\alpha^2}$, $v(3)=1+\frac 1 {3\alpha}$. We prove that this is indeed an XOS valuation by providing an explicit construction of the clauses of $v$: a clause $(x_j=1)$ for every item $x_j$, a clause $(x_j=\frac 1 2 +\frac 1 {24\alpha^2},x_{j'}=\frac 1 2 +\frac 1 {24\alpha^2})$, for every pair of $x_{j'}$ and $x_j$, $j'\neq j$, and a clause $(x_1=\frac 1 2, x_2=\frac 1 2, x_3=\frac 1 {3\alpha})$.

Note that in every equilibrium allocation all items must be allocated: if item $x_j$ is unallocated, then its price is $0$. However, since the valuations of the players are strictly increasing, the profit of his equilibrium allocation with $x_j$ is strictly bigger than his profit from his equilibrium allocation.


Thus, from now on we only consider allocations that allocate all items. There are two possible allocations that allocate all items (up to symmetry). We will show that both allocations are not supported by $\alpha$. In the first allocation, one player, say player $1$, gets all three items. The marginal value of an item for player $1$ is $\frac 1 {3\alpha}-\frac 1 {12\alpha^2}<\frac 1 {3\alpha}$. Thus the price of every item in equilibrium is at most $\alpha\cdot \frac 1 {3 \alpha}=\frac 1 3$ (if there is an item with a higher price, the profit for player $1$ of the bundle that contains the other two items is higher than the profit of the grand bundle). In this case the profit of player $2$ from taking one item is positive: $1-\frac 1 3=\frac 2 3$. A contradiction to the assumption that the empty bundle maximizes the profit of player $2$.

The other allocation is when one player, say player $1$, is allocated two items and player $2$ receives only one item. This allocation is not supported by $\alpha$ as well: since $v(1)=1 $ and $v(2)=1+  \frac 1 {12\alpha^2}$, the marginal value of player $1$ for any item is $\frac 1 {12\alpha^2}$. Thus, taking the endowment effect into account, the price of each item that is allocated to player $1$ cannot exceed $\frac 1 {12\alpha}$. But now the profit of the grand bundle for player $2$ is strictly higher than its current single item allocation: the marginal value of player 2 for the two items that were allocated to player $1$ is $\frac 1 {3\alpha}$, 
while the sum of prices of these items is at most $\frac 1 {6\alpha}$. We have reached a contradiction once again.
\end{proof}

For submodular valuations we can prove that the endowment gap is at least $\frac 3 2$. This leaves us with a small gap to the upper bound of $2$.

\begin{proposition}
There exists an instance of two players with submodular valuations where the endowment gap is at least $\frac 3 2$.
\end{proposition}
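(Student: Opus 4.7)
The plan is to invoke the LP reformulation of the endowment gap in Claim~\ref{claim-endowment-vs-integarlity}, which writes
\[
\mathcal{G}^A \;=\; \sup_{\{x_{i,S}\}} \frac{\sum_{i,S} x_{i,S}\, v_i(S) - \psi_{A,\{x_{i,S}\}}}{\sum_i v_i(A_i) - \psi_{A,\{x_{i,S}\}}}.
\]
Hence, to exhibit a two-bidder submodular instance with endowment gap at least $3/2$, it suffices to produce a concrete instance in which, for every allocation $A$, some fractional LP solution $\{x_{i,S}\}$ of the original (non-endowed) LP achieves ratio at least $3/2$; equivalently, $2(\sum_{i,S} x_{i,S} v_i(S) - \psi_{A,\{x_{i,S}\}}) \geq 3(\sum_i v_i(A_i) - \psi_{A,\{x_{i,S}\}})$.

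I propose to use two bidders with budget-additive valuations; this is a subclass of submodular, and the paper itself announces a matching $3/2$ lower bound in this subclass, so no generality is lost. The construction would take unit item values and identical budgets tuned slightly above $1$ on a small set of items (three to five), chosen so that for every integer allocation $A$ the LP fractional optimum of the non-endowed instance dominates $(3/2) \cdot \sum_i v_i(A_i)$ after the $\psi_{A,\{x\}}$ correction.

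The verification then proceeds by enumerating allocations $A$ up to the bidder-symmetry of the instance. With identical bidders these fall into a small number of classes: a \emph{concentrated} class (one bidder receives all items) and several \emph{balanced} classes distinguished by the pair $(|A_1|, |A_2|)$. For each class I would exhibit a fractional witness, with the two natural candidates being: (i) a \emph{swap} solution $x_{1,A_2} = x_{2,A_1} = 1$, which forces $\psi_{A,\{x\}} = 0$ whenever $A_1 \cap A_2 = \emptyset$ and so collapses the ratio to the integrality gap of the swapped allocation; and (ii) a \emph{symmetric spread} $x_{i,S} = \lambda$ over an appropriate family of bundles, used when the swap witness is weak (e.g., for the concentrated allocation where $A_2 = \emptyset$). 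Plugging in the parameters reduces the claim to a direct algebraic calculation in each class.

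The main obstacle is calibrating the instance so that \emph{every} allocation---not only the welfare-maximizing one---admits a witness attaining ratio $3/2$. Many small two-bidder submodular instances admit a Walrasian equilibrium for at least one allocation, which would collapse the instance's endowment gap to $1$, so considerable care is needed. Threading this needle requires the submodularity inequalities $v(k+1) - v(k) \leq v(k) - v(k-1)$ to be tight in the right places: the concentrated allocation forces $\alpha \geq v(1)/(v(m) - v(m-1)) \geq 3/2$ (hence a small final marginal), while the balanced allocations require a carefully chosen gap between $v(2)$ and $2v(1)$. Simultaneously satisfying these constraints within submodularity is the technical crux, and budget-additive valuations with an appropriately chosen budget parameter are the natural candidates.
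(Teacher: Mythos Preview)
Your high-level framework---invoking Claim~\ref{claim-endowment-vs-integarlity} and exhibiting, for every allocation $A$, a fractional witness $\{x_{i,S}\}$ with ratio at least $3/2$---is exactly the paper's strategy. The problem is the instance you propose.

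First, a misreading: the paper's $3/2$ lower bound for budget-additive valuations uses \emph{four} players, not two. The proposition at hand requires two players, so you cannot simply import that result. Second, and more concretely, your proposed instance---two identical budget-additive bidders with unit item values and budget $1+\varepsilon$---admits a Walrasian equilibrium and therefore has endowment gap $1$. For example, with three items and the allocation $(\{a,b\},\{c\})$, setting $p_a=p_b=p_c=\varepsilon$ makes $\{a,b\}$ a profit-maximizing bundle for bidder~1 (profit $1-\varepsilon$) and $\{c\}$ profit-maximizing for bidder~2 (also $1-\varepsilon$); every deviation is weakly worse. The same uniform-price construction works for $m=4,5$. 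So the very obstacle you flag---``many small two-bidder submodular instances admit a Walrasian equilibrium for at least one allocation''---bites your own construction.

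The paper's proof instead uses the Feige--Vondr\'ak two-bidder, four-item example, whose valuations are submodular but \emph{not} budget-additive and \emph{not} identical (the two bidders disagree on which pairs are worth $2$ versus $4/3$). That asymmetry is what prevents any allocation from being supported by $\alpha<3/2$: the four allocation types are handled one by one, two via the $\psi$-formula against the fractional optimum (value $4$), and two by direct price arguments showing no supporting prices exist. If you want to salvage your plan, you need a genuinely non-gross-substitutes two-player instance; identical budget-additive bidders will not do.
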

\begin{proof}
Feige and Vondrak \cite{feige2006approximation} show that the integrality gap of two players with submodular valuations is at least $\frac 6 5$. We now show that this example shows that the endowment gap is at least $\frac 3 2$.

As mentioned, we have two players, call them Alice and Bob. There are $4$ items, the value of each singleton is $1$ the value of each bundle of three or more items is $2$. The values of pairs of items are:
$$\begin{array}{ccc}
 & Alice & Bob \\ 
\{ab\} & 2  &  4/3\\ 
\{cd\} & 2  &  4/3\\ 
\{ac\} & 4/3 & 2\\ 
\{bd\} & 4/3 & 2 \\ 
\{ad\} & 5/3 & 5/3  \\
\{bc\} & 5/3 & 5/3
\end{array} $$
There are $4$ possible allocations of all items (up to symmetry), and we will see that none of them is supported by $\alpha<\frac 3 2$. We will use the fact that the value of the optimal fractional solution is $4$ (Alice receives $\frac 1 2$ of each of the sets $\{ab\}, \{cd\}$, Bob receives $\frac 1 2$ of each of the sets $\{ac\},\{bd\}$: $x_{A,\{ab\}}=x_{A,\{cd\}}=x_{B,\{ac\}}=x_{B,\{bd\}}=\frac 1 2$). 
\begin{enumerate}
\item $(\{abcd\},\emptyset)$: The value of this allocation is $2$, and since there is a fractional solution with value $4$, by Claim \ref{claim-endowment-vs-integarlity} this allocation requires $\alpha\geq 2$ to support it.
\item $(\{abc\},\{d\})$: Suppose that there is an equilibrium with some $\alpha>1$ and prices $p_a,p_b,p_c,p_d$. $v_A(c|\{ab\})=0$, hence $p_c=0$ (as if $p_c>0$ the profit of Alice from the bundle $\{ab\}$ is strictly bigger than that of $\{abc\}$.) However, this means that $v_B(c|\{d\})-p_c=\frac 1 3$, thus
$$
\alpha \cdot v_B(\{d\})-p_d< \alpha \cdot v_B(\{d\})-p_d+ v_B(c|\{d\})-p_c
$$

I.e., the profit of Bob from the bundle $\{cd\}$ is strictly bigger than that of $\{d\}$. A contradiction to the assumption that there is an equilibrium.
\item $(\{ab\},\{cd\})$: Denote this allocation by $A$ and note that its welfare is $10/3$. 
Observe that:
$$\psi_{A, \{x_{i,S}\}} =\sum_{i=1}^n \sum_{S \subseteq M} x_{i,S} \cdot v_i(S \cap A_i)=\frac 1 2v_A(\{ab\})+\frac 1 2v_B(\{c\})+\frac 1 2v_B(\{d\})=2
$$

Thus, by Claim \ref{claim-endowment-vs-integarlity}, to support $A$ we must have $\alpha\geq \frac {4-2} {\frac {10} 3-2}=\frac {3}{2}$.
\item $(\{ad\},\{bc\})$: Denote this allocation by $A$ and note that its welfare is $10/3$. Observe that 
$$\psi_{A, \{x_{i,S}\}} =\sum_{i=1}^n \sum_{S \subseteq M} x_{i,S} \cdot v_i(S \cap A_i)=\frac 1 2v_A(\{a\})+\frac 1 2v_A(\{d\})+\frac 1 2v_B(\{b\})+\frac 1 2v_B(\{c\})=2.
$$
Similarly to before, by Claim \ref{claim-endowment-vs-integarlity}, to support $A$ we must have $\alpha\geq \frac {4-2} {\frac {10} 3-2}=\frac 3 2$.
\end{enumerate}
Finally, we note that the optimal integral allocation $(\{ab\},\{cd\})$ is indeed supported by $\alpha= 3/2$ by using the prices $p_a=p_b=1$, $p_c=p_d=2/3$. 
\end{proof}

The above claim shows that if we have two players with general submodular valuations then the endowment gap is at least $\frac 3 2$. Next, we consider a more restricted class -- budget additive valuations -- and show that even in this much simpler class the endowment gap is essentially the same. However, we do need more players to show this.

\begin{proposition}
For every $\epsilon>0$, there exists an instance of four players with budget additive valuations in which the endowment gap is at least $\frac 3 {2+\epsilon/2}$. 
\end{proposition}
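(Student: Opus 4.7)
The strategy is to adapt the two-player submodular construction of the previous proposition to a four-player budget-additive instance and then apply Claim~\ref{claim-endowment-vs-integarlity} to derive the endowment gap lower bound. The parameter $\epsilon$ in the bound reflects the slack introduced by the more restricted expressiveness of budget additive valuations relative to general submodular ones; it will be used to tune the construction so that the fractional LP strictly beats every integer allocation while the target ratio is preserved.

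Concretely, I would take four items $\{a,b,c,d\}$ and four players whose budget additive valuations encode the four canonical pair preferences $\{a,b\}$, $\{c,d\}$, $\{a,c\}$, $\{b,d\}$. Each player $i$ is assigned weights $w^i_j$ that are close to $1$ on the two items of its primary pair and small, of order $\epsilon$, on the remaining items, together with a budget close to $2$. The parameters are chosen so that the balanced fractional solution $x_{i, P_i} = 1/2$ over the four primary pairs $P_i$ achieves LP welfare near $4$, while every maximal integer allocation achieves strictly smaller welfare, with the slack calibrated to yield the desired ratio. After fixing the instance, one enumerates the maximal allocations up to symmetry; the main classes to consider are pair-partition allocations such as $(\{a,b\}, \{c,d\}, \emptyset, \emptyset)$ or $(\emptyset, \emptyset, \{a,c\}, \{b,d\})$, singleton allocations such as $(\{a\}, \{d\}, \{c\}, \{b\})$, and possibly allocations in which one player grabs three or four items. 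For each class, I would pick a fractional witness $\{x_{i,S}\}$ whose overlap $\psi_{A, \{x_{i,S}\}}$ with $A$ is small relative to the welfare of the witness, typically a competing pair partition using the pairs complementary to those used in $A$. A direct calculation analogous to the one in the previous proposition then yields
\begin{align*}
\frac{\sum_{i,S} x_{i,S} v_i(S) - \psi_{A, \{x_{i,S}\}}}{\sum_i v_i(A_i) - \psi_{A, \{x_{i,S}\}}} \geq \frac{3}{2+\epsilon/2},
\end{align*}
and Claim~\ref{claim-endowment-vs-integarlity} gives the stated endowment gap lower bound.

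The main obstacle is the tension between the limited expressiveness of budget additive valuations and the need for a nontrivial integrality gap. Unlike the two-player submodular construction, where submodularity immediately creates the $3/2$ gap, budget additive valuations admit many integer allocations that come arbitrarily close to the fractional optimum, so the construction must be carefully tuned, via the weights and the $\epsilon$ perturbation, to keep the fractional optimum strictly ahead of every integer allocation. Moreover, since the endowment gap of an instance is the minimum over all maximal allocations of the per-allocation endowment gap, the target ratio must be verified uniformly for every allocation class, not just for the welfare-maximizing integer allocation; the case analysis needed to do so is the main technical content of the proof.
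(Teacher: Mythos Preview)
Your proposal has a genuine gap in the construction itself. With four items $\{a,b,c,d\}$ and four players whose primary pairs are $\{a,b\}$, $\{c,d\}$, $\{a,c\}$, $\{b,d\}$, the fractional solution $x_{i,P_i}=\tfrac12$ has value close to $4$, but so does the \emph{integer} allocation that gives player~1 the pair $\{a,b\}$ and player~2 the pair $\{c,d\}$ (players~3 and~4 get nothing). Because two of the four primary pairs already partition the item set, there is no integrality gap, and hence no room for Claim~\ref{claim-endowment-vs-integarlity} to produce a nontrivial lower bound: for the allocation $A=(\{a,b\},\{c,d\},\emptyset,\emptyset)$ and your fractional witness, both numerator and denominator in the ratio are essentially zero. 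Adding $\epsilon$-weights on non-primary items does not help; it perturbs all allocations uniformly and does not separate the fractional optimum from the best integer one. Budget-additive valuations on this symmetric $4$-item gadget are simply too tame to create the tension you need.

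The paper's proof takes a quite different route. The instance has \emph{five} items $x_1,x_2,y_1,y_2,c$ and four players $a_1,a_2,b_1,b_2$; the central item $c$ is valued at $2$ by both $b_1$ and $b_2$ (who have budgets $2+\epsilon$), while the $x$'s and $y$'s are valued at $1$ by the appropriate pairs of players. It is this contested bottleneck item $c$ that generates the gap: fractionally one can split $c$ between $b_1$ and $b_2$, but integrally one of them must forgo it and fall back on the lower-value $x$'s or $y$'s, where they collide with $a_1$ or $a_2$. Methodologically the paper also differs from your plan: rather than invoking Claim~\ref{claim-endowment-vs-integarlity}, it argues directly with equilibrium prices. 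A sequence of short lemmas pins down, up to symmetry, exactly which allocations can possibly be supported, and then for each surviving allocation the equilibrium inequalities (that each player weakly prefers her bundle to specific deviations) are summed to yield $\alpha(4+\epsilon)\geq 6$.
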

\begin{proof}
We consider a modification of the integrality gap example of Chakrabarty and Goel \cite{chakrabarty2010approximability}. We have $4$ players ($a_1,a_2,b_1,b_2)$ and $5$ items ($x_1, x_2,y_1,y_2,c$). $a_1,a_2$ have budget $1$, $b_1,b_2$ have budget $2+\epsilon$, for some arbitrarily small $\epsilon>0$.

To complete the definition of the valuations, we only need to specify the values of players for single items:
\begin{itemize}
	\item Players $a_1$ and $b_1$ have value $1$ for items $x_1,x_2$.
	\item Players $a_2$ and $b_2$ have value $1$ for items $y_1,y_2$.
	\item Players $b_1,b_2$ value $2$ for item $c$.
	\item The rest of the values are $0$.
\end{itemize}

To analyze the endowment gap we provide several lemmas that characterize the allocations that can be supported in an endowed equilibrium. We then show that each of these allocations can be supported only by $\alpha \geq \frac 3 {2+\epsilon/2}$.
\begin{lemma}
Without loss of generality, in an endowed equilibrium $b_1$ is assigned item $c$. 
\end{lemma}
\begin{proof}
We claim that in equilibrium item $c$ must be allocated to one of the players $b_1,b_2$. If $c$ is not allocated, then its price must be $0$. The same holds if $c$ is allocated to player $a_1$ or player $a_2$, the value of both players for $c$ is $0$. Now observe that each player $b_i$ has three items with positive value, and that $b_1(c|\{x_1,x_2\})=\epsilon>0$. Thus, when $p_c=0$, the profit of player $b_1$ from a bundle that contains $c$ is always strictly bigger than his profit from a bundle without it. The argument for player $b_2$ is identical, and thus without loss of generality in equilibrium $b_1$ is assigned item $c$.
\end{proof}

\begin{lemma}
Without loss of generality, in an endowed equilibrium player $b_2$ is assigned at least one of $y_1,y_2$.
\end{lemma}
\begin{proof}
Observe that if player $b_1$ was not assigned any of the items $y_1,y_2$ , then his profit is $0$, since he has positive value only for items $c,y_1,y_2$ and did not get any of them. The only other player with positive value for $y_1$ and $y_2$ is $a_2$. However, if $a_2$ receives the bundle $\{y_1,y_2\}$ the marginal contribution of any of these items is $0$, thus the price of both of these two items is $0$. In that case, the profit of player $b_2$ from the bundle $\{y_1\}$ is strictly positive and is bigger than his $0$ profit in equilibrium, which is a contradiction. 
\end{proof}

\begin{lemma}
Without loss of generality, in an endowed equilibrium $a_1$ is assigned item $x_2$. 
\end{lemma}
\begin{proof}
We can claim that player $a_1$ is allocated at least one of $x_1,x_2$: the only other player with positive value for these items is $b_1$. Now, $b_1$ is already assigned item $c$, so the marginal contribution of the item is at most $\epsilon$ and thus the price is at most $\alpha\cdot \epsilon$. For $\epsilon< \frac 1 \alpha$, $\alpha\cdot \epsilon<1$ and the bundle $\{x_1\}$ has a positive profit for player $a_1$. This profit is bigger than his $0$ profit in equilibrium.
\end{proof}

\begin{lemma}
Without loss of generality, in equilibrium player $b_1$ is assigned $x_1$ and $p_{x_1}<\alpha\cdot \epsilon$.
\end{lemma}
\begin{proof}
Observe that $x_1$ can only contribute positively to players $a_1,b_1$ and that $a_1(x_1|\{x_2\})=0$. Thus, if $x_1$ is allocated to $a_1$ then its price must be $0$, but then 
$b_1(x_1|\{c\})>0$, which implies that $b_1$ profit increases when he adds $x_1$ to his bundle. Similarly to before, this means that this is not an equilibrium allocation. Also, similarly to the previous lemma, $p_{x_1}<\alpha\cdot \epsilon$.
\end{proof}

This leaves us with the following two allocations that can be supported (up to symmetry):
\begin{enumerate}
\item $b_1$ gets $\{c,x_1\}$, $b_2$ gets $\{y_1,y_2\}$, $a_1$ gets $x_2$: in this case we have:
\begin{enumerate}
\item $b_2$ prefers his allocation over item $c$: $2\cdot \alpha-p_{y_1}-p_{y_2}\geq 2-p_c$.
\item $a_1$ prefers his allocation over item $x_1$ (recall that $p_{x_1}\leq \alpha\cdot \epsilon$): $\alpha-p_{x_2}\geq 1-\epsilon\cdot \alpha$.
\item $b_1$ prefers his allocation over $\{x_1,x_2\}$: $2\cdot \alpha -p_c\geq 1+\alpha -p_{x_2}$ (by rearranging $\alpha-p_c \geq 1 -p_{x_2}$). 
\item $a_2$ that has zero profit in equilibrium has a non-positive profit from $y_1$ and $y_2$: $0\geq 1-p_{y_1}$, $0\geq 1-p_{y_2}$.
\end{enumerate}

Summing these inequalities we get that $ \alpha\cdot (4+\epsilon)\geq 6$. I.e., to support this allocation we need $\alpha \geq \frac {3} {2+\epsilon/2}$. As $\epsilon$ approaches $0$ this ratio approaches $\frac 3 2$.

\item $b_1$ gets $\{c,x_1\}$, $b_2$ gets $y_1$, $a_1$ gets $x_2$, $a_2$ gets $y_2$: 
\begin{enumerate}
\item $b_2$ prefers his allocation over item $c$: $\alpha-p_{y_1} \geq 2-p_c$.
\item $b_2$ prefers his allocation over $\{y_1,y_2\}$: $\alpha-p_{y_1}\geq \alpha+1-p_{y_1}-p_{y_2}$. Hence $p_{y_2}\geq 1$.
\item $a_1$ prefers his allocation over item $x_1$ (recall that $p_{x_1}\leq \alpha\cdot \epsilon$): $\alpha-p_{x_2}\geq 1-\epsilon\cdot \alpha$.
\item $b_1$ prefers his allocation over $\{x_1,x_2\}$: $2\cdot \alpha -p_c\geq 1+\alpha -p_{x_2}$.
\item $a_2$ prefers his allocation over item $y_1$: $\alpha-p_{y_2}\geq 1-p_{y_1}$.
\end{enumerate}
Summing these inequalities we get that $ \alpha\cdot (4+\epsilon)\geq 6$. I.e., to support this allocation we need $\alpha \geq \frac {3} {2+\epsilon/2}$. As $\epsilon$ approaches $0$ this ratio approaches $\frac 3 2$.
\end{enumerate}

%
\end{proof}

\subsubsection*{Acknowledgments}

The second author was partially supported by BSF grant no. 2016192.
\bibliographystyle{plain}
\bibliography{n}

\fullversion{
\appendix

\section{Existence of Endowed Equilibrium}
\label{sec:support-instance}


We now show that for every instance there is some allocation that can be $\alpha$-supported with some $\alpha>1$. The caveat is that this $\alpha$ might be huge and instance dependent. We will bring two (similar) proofs for this: one that shows that there is a welfare maximizing allocation that can be supported, and another proof that shows how to find such allocation in a computationally efficient way.

The value of $\alpha$ used in Proposition \ref{prop-exists-alpha} is instance dependent and can be very large (depends on the value of $OPT$, the maximal welfare for the instance) and thus does not provide a uniform upper bound that hold for all instances. We show (Proposition \ref{prop-XOS}) that an upper bound that holds for all instances does not exist. 

\begin{proposition}\label{prop-exists-alpha}\
\begin{enumerate}
	\item In every instance there is some welfare maximizing allocation $O=(O_1,\ldots, O_n)$ for which there exists some $\alpha>1$ that supports it.
	\item There exists an algorithm that uses $poly(m,n)$ value queries that finds some allocation that is supported by some $\alpha>1$.
\end{enumerate}
\end{proposition}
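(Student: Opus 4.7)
The plan is to reduce both parts to Proposition~\ref{prop-maximal}, which guarantees that every maximal allocation is supported by some $\alpha$ (and the concrete $\alpha$ constructed in its proof is comfortably above $1$). Thus for Part~1 it suffices to exhibit a welfare-maximizing allocation that is maximal, and for Part~2 it suffices to compute a maximal allocation using $\mathrm{poly}(m,n)$ value queries.

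For Part~1, I would start from any welfare-maximizing allocation $O$ and iteratively \emph{shrink} it: while some player $i$ owns an item $j \in O_i$ with $q_j = v_i(j \mid O_i - \{j\}) = 0$, remove $j$ from $O_i$. Each such step preserves $v_i(O_i)$ and hence the total welfare, and it terminates after at most $m$ steps because $|\cup_i O_i|$ strictly decreases. At termination no allocated item has zero marginal, so the zero-marginal set $Z$ coincides with the set $U$ of currently unallocated items. The key claim is then that $v_i(O_i \cup U) = v_i(O_i)$ for every player $i$: if the strict inequality held, reassigning $U$ to player $i$ (which is legal because $U$ is disjoint from all other bundles) would strictly raise the welfare, contradicting welfare-maximality, while monotonicity supplies the reverse inequality. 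Hence $O$ is maximal and Proposition~\ref{prop-maximal} yields the desired $\alpha > 1$.

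For Part~2, since a welfare maximizer is generally not computable in polynomially many value queries, I would run an analogous greedy procedure without the welfare-max prefix. Initialize $O_i = \emptyset$ for every $i$ and $U = M$; then, processing players in any fixed order $i = 1, \ldots, n$, do the following: if $v_i(U) > 0$, assign all of $U$ to player $i$ and reset $U = \emptyset$; then shrink $O_i$ by repeatedly removing zero-marginal items (returning each to $U$). Each shrinking phase uses $O(m^2)$ value queries, so the total cost is $O(nm^2)$. To verify maximality of the output, let $U_{i-1}$ be the value of $U$ just before iteration $i$ and $U_n$ the final unallocated set, and note that once iteration $i$ ends the bundle $O_i$ is frozen and contains only positive-marginal items, so the final $Z$ equals $U_n$. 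If player $i$ took $U_{i-1}$, shrinking preserves $v_i(O_i) = v_i(U_{i-1})$, and since $U_n \subseteq U_{i-1} \setminus O_i$ we have $O_i \cup U_n \subseteq U_{i-1}$; monotonicity forces $v_i(O_i \cup U_n) = v_i(O_i)$. If player $i$ did not take $U_{i-1}$, then $O_i = \emptyset$ and $v_i(U_{i-1}) = 0$, so monotonicity gives $v_i(U_n) \leq v_i(U_{i-1}) = 0 = v_i(O_i)$. Either way $v_i(O_i \cup Z) = v_i(O_i)$, so $O$ is maximal, and Proposition~\ref{prop-maximal} produces $\alpha$ and prices in $\mathrm{poly}(m,n)$ further queries.

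The hard part in both arguments is verifying the \emph{set}-level condition $v_i(O_i \cup Z) = v_i(O_i)$ appearing in the definition of maximality, rather than the weaker item-level condition $v_i(j \mid O_i) = 0$ for each $j \in Z$: the two are \emph{not} equivalent for general (non-submodular) valuations, so naive single-item local search does not suffice. The shrinking step is calibrated precisely so that at termination $Z$ consists exactly of the items the procedure has left unallocated, which reduces the set-level condition to a one-shot monotonicity argument against either welfare-maximality (Part~1) or the ``took-all-of-$U$'' step (Part~2).
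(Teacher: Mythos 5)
Your proof is correct and follows essentially the same strategy as the paper's: reduce both parts to Proposition~\ref{prop-maximal} by producing a maximal allocation — for Part~1, shrink a welfare-maximizing allocation by discarding zero-marginal items and argue via welfare-optimality that $Z$ must have zero marginal for every player; for Part~2, run the same shrinking procedure greedily, handing the residual pool to each player in turn. The only difference is that you spell out the maximality verification for the greedy procedure (tracking $U_{i-1}$ and $U_n$ and invoking monotonicity), a step the paper merely asserts; that is a welcome tightening rather than a different route.
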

\begin{proof}
Both proofs rely on applying Proposition \ref{prop-maximal}. For the first part, we show that there exists a welfare maximizing allocation that is maximal: start with some optimal allocation $(O_1,\ldots, O_n)$ and consider the following process: if there is some player $i$ and item $j$ with $v_i(j|O_i)=0$, remove item $j$ from $O_i$. Repeat this process until obtaining an allocation $(O'_i,\ldots, O'_n)$ where every allocated item have a positive marginal value. Note that $(O'_i,\ldots, O'_n)$ has the same value as $(O_i,\ldots, O_n)$ and thus it is welfare maximizing.

Let $Z$ be the set of unallocated items. Observe that any other item has a posititve marginal contribution to the allocation. Furthermore, notice that for every $i$ we have that $v_i(Z|O'_i)=0$, since otherwise the welfare of the allocation $(O;_1,\ldots, O'_{i-1},O'_i+Z,O'_{i+1},\ldots, O'_n)$ is strictly larger than $OPT$ (this is a valid allocation, as items in $Z$ are not allocated at all), which is a contradiction. Now we can apply Proposition \ref{prop-maximal} and get that $(O'_i,\ldots, O'_n)$ can be supported by some $\alpha>1$.

For the second part of the proposition, we show how to efficiently find a maximal allocation. Start with the grand bundle $M$, and remove from it some item with a zero marginal value for player $1$: $v_1(j|M-\{j\}) =0$, if such exists. Then repeat this process for the next item that have a zero marginal value until obtaining some set $S_1$ such that for every $j\in S_1$, $v_1(j|S_1-\{j\}) >0$ and $v_1(S_1)=v_1(M)$. Similarly, starting with the remaining items $M-S_1$, obtain a bundle $S_2\subseteq M-S_1$ such that for every $j\in S_2$, $v_1(j|S_2-\{j\}) >0$ and $v_1(S_2)=v_1(M-S_1)$. Repeat similarly with the remaining items for the next players. Observe that $(S_1,\ldots, S_n)$ is maximal. Hence by the lemma there is some $\alpha>0$ that supports it. Note that the number of value queries that we make during the process is $poly(n,m)$.
\end{proof}

Not only are there allocations which cannot be supported by any $\alpha$, even some \emph{welfare maximizing} allocations cannot be supported by any $\alpha$. Specifically, this is the case for some welfare maximizing allocations in which items of zero marginal value are allocated to agents. See Example \ref{example:opt-not-supported} below. Note that this is very different than in the case of Walrasian equilibrium (when $\alpha=1$) in which allocating item of zero marginal value to the agents, and pricing them at zero, cannot destroy the equilibrium. 

\begin{example}\label{example:opt-not-supported}	
Consider a setting with three items and two players, where the value of the players for a bundle is only a function of its size: for player $1$, the value of any pair of items (and by monotonicity also of the grand bundle) is $1$, and for player $2$ the value of any pair of items is $\epsilon$ satisfying $1>\epsilon>0$. An optimal allocation is to allocate all three items to player $1$. In this case, the marginal value of each of the items is $0=v_1(1|2)$. Thus the price of each of the items must be $0$ since if it is positive the profit of player $1$ for a bundle that contains two items is bigger than the profit of his equilibrium allocation. However, for a price of $0$ for each item, any pair of items give a positive profit for player $2$, which is bigger than his $0$ profit from the empty bundle.
	
Note that in this example the welfare maximizing allocation that allocates items $\{a,b\}$ to player $1$ and leaves item $c$ unallocated is supported by, e.g., $p_a=1, p_b=1, p_c=0$ and $\alpha=2$.
\end{example}

\section{$\alpha$-Endowed Equilibrium for $\alpha<1$}\label{appendix-small-alpha}

In this section we discuss the case of $\alpha<1$. 
We show that if an $\alpha$-endowed equilibrium exists then its allocation must maximize the welfare with respect to the original valuations. 
This implies that to be able to present such an equilibrium, we must be able to compute a welfare maximizing allocation. 
Additionally, we show that even for unit-demand valuations, an $\alpha$-endowed equilibrium might fail to exist for any  $0<\alpha<1$, establishing that for gross-substitutes valuations the minimal $\alpha$ needed to  always support an $\alpha$-endowed equilibrium is indeed $1$.

\subsection{Any $\alpha$-Endowed Equilibrium for $0<\alpha<1$ is Welfare Maximizing}
We first show that for $0<\alpha\leq 1$ any $\alpha$-supported allocation is socially efficient. 

\begin{theorem}
	For any $0<\alpha\leq 1$, if an $\alpha$-endowed equilibrium exists then its allocation is welfare maximizing  with respect to the original valuations.
\end{theorem}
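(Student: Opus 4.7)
The plan is to show that any $\alpha$-endowed equilibrium with $0<\alpha\leq 1$ is in fact a (standard) Walrasian equilibrium of the original instance, and then invoke the First Welfare Theorem. The intuition is that when $\alpha\leq 1$, the endowment transformation only \emph{reduces} the attractiveness of the endowed bundle relative to disjoint alternatives; hence a player who is content with $S_i$ under $v_i^{S_i,\alpha}$ is even more content with $S_i$ under the original $v_i$.

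Let $(S_1,\ldots,S_n)$ and $(p_1,\ldots,p_m)$ form an $\alpha$-endowed equilibrium, and fix a player $i$ and an arbitrary bundle $T$. I would start from the equilibrium inequality
\[
\alpha\cdot v_i(S_i) - \sum_{j\in S_i} p_j \;\geq\; v_i(T) + (\alpha-1)\cdot v_i(S_i\cap T) - \sum_{j\in T} p_j,
\]
using the identity $v_i^{S_i,\alpha}(X)=v_i(X)+(\alpha-1)v_i(S_i\cap X)$ and $v_i^{S_i,\alpha}(S_i)=\alpha v_i(S_i)$. Adding $(1-\alpha)\,v_i(S_i)$ to both sides and rearranging yields
\[
v_i(S_i)-\sum_{j\in S_i}p_j \;\geq\; v_i(T)-\sum_{j\in T}p_j \;+\; (1-\alpha)\bigl(v_i(S_i)-v_i(S_i\cap T)\bigr).
\]
The key step is then to observe that the correction term is non-negative: $1-\alpha\geq 0$ by hypothesis, and $v_i(S_i)\geq v_i(S_i\cap T)$ by monotonicity of $v_i$ (since $S_i\cap T\subseteq S_i$). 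Dropping this non-negative term gives exactly the standard Walrasian demand condition $v_i(S_i)-\sum_{j\in S_i}p_j\geq v_i(T)-\sum_{j\in T}p_j$ for every $T\subseteq M$.

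Since the market-clearing condition ``$j\notin\cup_i S_i \Rightarrow p_j=0$'' is identical in the two notions of equilibrium, I conclude that $(S_1,\ldots,S_n)$ together with $(p_1,\ldots,p_m)$ forms a Walrasian equilibrium of the original instance $(v_1,\ldots,v_n)$. The First Welfare Theorem then immediately implies that $(S_1,\ldots,S_n)$ is welfare maximizing with respect to $(v_1,\ldots,v_n)$, which is what we wanted. There is no real obstacle here — the whole proof is a one-line algebraic manipulation followed by an appeal to the classical First Welfare Theorem; the only thing to be careful about is the direction of the inequality $v_i(S_i)\geq v_i(S_i\cap T)$ and the sign of $(1-\alpha)$, both of which go the right way precisely because $\alpha\leq 1$.
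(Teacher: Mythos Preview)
Your proof is correct and uses the same core inequality as the paper: both rely on $(1-\alpha)\bigl(v_i(S_i)-v_i(S_i\cap T)\bigr)\geq 0$ from monotonicity and $\alpha\leq 1$. The difference is only in packaging. The paper specializes $T=O_i$ for a welfare-maximizing allocation $(O_1,\ldots,O_n)$, sums the resulting inequalities over players, and cancels the price terms directly to get $\alpha\sum_i v_i(S_i)\geq \alpha\sum_i v_i(O_i)$. You instead keep $T$ arbitrary, deduce the full Walrasian demand condition for the original valuations, and then invoke the First Welfare Theorem. Your route has the small advantage of yielding the stronger statement that the \emph{same prices} support $(S_1,\ldots,S_n)$ as a genuine Walrasian equilibrium of $(v_1,\ldots,v_n)$, not merely that the allocation is optimal; the paper's route is self-contained and avoids appealing to an external theorem. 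Since the proof of the First Welfare Theorem is precisely ``sum the demand inequalities and cancel prices,'' the two arguments are really the same computation viewed at different levels of abstraction.
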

\begin{proof}
Consider an endowed equilibrium with allocation $(S_1,\ldots, S_n)$ with prices $p_1,\ldots, p_m$. Let $(O_1,\ldots, O_n)$ be a welfare maximizing allocation. For each player $i$ we have:
\begin{align*}
\alpha \cdot v_i(S_i) - p(S_i) 
&\geq \alpha \cdot v_i(S_i \cap O_i) + v_i(O_i - S_i | S_i \cap O_i) - \sum_{j\in O_i}p_j \\
&= \alpha \cdot v_i(S_i \cap O_i) + v_i(O_i) - v_i(S_i \cap O_i)- \sum_{j\in O_i}p_j\\
&= v_i(O_i) - (1-\alpha) v_i(S_i \cap O_i)- \sum_{j\in O_i}p_j\\
&\geq v_i(O_i) - (1-\alpha) v_i(O_i)- \sum_{j\in O_i}p_j
= \alpha \cdot v_i(O_i) - \sum_{j\in O_i}p_j
\end{align*}
Where the last inequality is due to the fact that for all $i$ it holds that $v_i(O_i)\geq v_i(S_i \cap O_i)$, and since $1-\alpha\geq 0$. Taking a sum over all the players and using $\sum_i\sum_{j\in S_i} p_j = \sum_i\sum_{j\in O_i} p_j$ we get that:
\begin{align*}
\alpha \sum_i v_i(S_i) \geq \alpha \sum_i v_i(O_i)
\end{align*}
We conclude that $\sum_i v_i(S_i)\geq \sum_i v_i(O_i)$, so $(S_1,\ldots, S_n)$ is socially efficient.
\end{proof}

\subsection{Any $\alpha$-Endowed Equilibrium for $\alpha=0$ is Welfare Maximizing}
We next show that for $\alpha=0$ any $\alpha$-supported allocation is socially efficient. 
\begin{observation}
	For $\alpha=0 $, the allocation of any $\alpha$-endowed equilibrium is welfare maximizing with respect to the original valuations.
\end{observation}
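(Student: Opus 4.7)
The plan is to directly verify the observation by unfolding the equilibrium condition at $\alpha=0$. When $\alpha=0$, the endowed valuation specializes to
$$
v_i^{S_i,0}(T) \;=\; v_i(T) - v_i(S_i\cap T),
$$
so in particular $v_i^{S_i,0}(S_i)=0$. The equilibrium condition for player $i$ with respect to bundle $T$ therefore reads $-p(S_i) \geq v_i(T) - v_i(S_i\cap T) - p(T)$.

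Now let $(O_1,\ldots,O_n)$ be a welfare-maximizing allocation with respect to the original valuations. By monotonicity we may assume without loss of generality that $\cup_i O_i = M$: otherwise, redistributing unallocated items to any player weakly increases the welfare and preserves optimality. Instantiate the equilibrium condition at $T=O_i$ for each $i$ to get
$$
p(O_i) - p(S_i) \;\geq\; v_i(O_i) - v_i(S_i\cap O_i).
$$

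Sum this inequality over all $i$. Since $\cup_i O_i = M$, we have $\sum_i p(O_i)=\sum_j p_j$, and since $p_j=0$ for every item $j\notin\cup_iS_i$ by the second equilibrium condition, we also have $\sum_i p(S_i)=\sum_j p_j$. Thus $\sum_i p(O_i)-\sum_i p(S_i)=0$, yielding
$$
\sum_i v_i(S_i\cap O_i) \;\geq\; \sum_i v_i(O_i).
$$
By monotonicity, $v_i(S_i) \geq v_i(S_i\cap O_i)$ for each $i$, so $\sum_i v_i(S_i)\geq \sum_i v_i(O_i)$, proving that $(S_1,\ldots,S_n)$ is welfare-maximizing with respect to the original valuations.

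The only mildly delicate step is the bookkeeping of the prices, which is handled by the WLOG reduction to $\cup_i O_i = M$; the rest is a direct substitution into the $\alpha=0$ equilibrium condition, and the argument does not require the $\alpha>0$ trick of dividing through by $\alpha$.
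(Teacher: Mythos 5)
Your proof is correct, and it takes a genuinely different route from the paper's. The paper handles $\alpha=0$ by a structural argument: it first shows that every item must be priced at zero (otherwise a player would drop priced items, since at $\alpha=0$ his endowed value for what he holds is zero), then deduces from the preference of $S_i$ over $M$ that $v_i(M)=v_i(S_i)$ for every player, and finally obtains welfare maximization by a contradiction against any better allocation. Your proof instead runs the standard First Welfare Theorem comparison directly: instantiate the equilibrium inequality at $T=O_i$ for a welfare-maximizing $O$ with $\cup_i O_i=M$, sum over $i$, cancel the prices using market clearance, and bound $v_i(S_i\cap O_i)\leq v_i(S_i)$ by monotonicity. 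The key departure from the paper's $0<\alpha\leq 1$ proof is that you do not bound $(1-\alpha)v_i(S_i\cap O_i)$ by $(1-\alpha)v_i(O_i)$ and then divide by $\alpha$; since you bound $v_i(S_i\cap O_i)$ by $v_i(S_i)$ on the other side, the argument survives at $\alpha=0$, which is exactly where the paper's $\alpha>0$ argument collapses to $0\geq 0$. Your route is more economical and unifies the $\alpha=0$ and $0<\alpha\leq 1$ cases; the paper's route for $\alpha=0$ yields additional structural information (all prices vanish and $v_i(S_i)=v_i(M)$ for each $i$) beyond welfare maximization.
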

\begin{proof}
Fix some $\alpha$-endowed equilibrium $S= (S_1,\ldots, S_n)$ for $\alpha=0$. 
Since $\alpha=0$, each player $i$ has no endowed value for the items he gets, so he must be paying 0 for each of the items in $S_i$ (otherwise he prefers dropping all items with positive price). Thus the price of all items must be $0$ and the endowed profit is zero: $v_i^{S_i,\alpha}(S_i)-\Sigma_{j\in S_i} p_j=0\cdot v_i(S_i)-0=0$. 
As $i$ prefers $S_i$ to $M$ it holds that $v_i^{S_i,\alpha}(S_i)-\Sigma_{j\in S_i} p_j=0\geq 0\cdot v_i(S_i) + v_i(M|S_i) - \sum_{j\in M}p_j = v_i(M|S_i)- 0=v_i(M) - v_i(S_i)\geq 0$, and thus $v_i(M) = v_i(S_i)$.
 
We get that $(S_1,\ldots, S_n)$ is an $0$-endowed equilibrium if and only if $v_i(M) = v_i(S_i)$ for player $i$.
We next show that this implies that $S$ is welfare maximizing. Assume that it is not, and there is an allocation $(O_1, O_2,\ldots, O_n)$ such that
$\sum_{i} v_i(O_i)> \sum_{i} v_i(S_i)$. For this to be possible, it must be the case that for at least one player $i$ it holds that $v_i(O_i)>v_i(S_i)$. But this yields a contradiction as 
$v_i(M)\geq v_i(O_i)>v_i(S_i)= v_i(M)$.
We conclude that if $(S_1,\ldots, S_n)$ is the allocation of an $\alpha$-endowed equilibrium for $\alpha=0$ then it is welfare maximizing with respect to the original valuations.
\end{proof}

\subsection{Unit-demand Valuations are not $\alpha$-supported for $0<\alpha<1$}
\label{subsec-alpha-less-1}
Recall that an $\alpha$-endowed equilibrium for $\alpha=1$ is simply a Walrasian equilibrium. When valuations are gross-substitutes (e.g. unit demand) then it is well known that a Walrasian equilibrium exists. We next show that even for unit-demand valuations, an $\alpha$-endowed equilibrium does not exist for $\alpha<1$, and thus for unit-demand valuations it is indeed required that $\alpha\geq 1$ to ensure the existence of an $\alpha$-endowed equilibrium. 
	
When $\alpha<1$ each player discounts the value of the items that he receives. Here is a very simple instances with unit-demand valuations that does not admit any $\alpha$-endowed equilibrium for $\alpha<1$: consider $n$ identical players and $n$ identical items. Each player is unit demand and wants any single item for a value of $1$. Observe that if there is an $\alpha$-endowed equilibrium, then by individual rationality the price of each item is be at most $\alpha$. Let player $i$ be the player that is allocated the item $j$ with the highest price $p_j$ (if there are several such players, choose one arbitrarily). Observe that player $i$ prefers any other item $j'$: $\alpha\cdot  v_i(\{j\}) - p_j <  v (\{j'\}) - p_{j'}$. This is because $\alpha < 1$ and since $p_i \geq p_j$. Thus, no $\alpha$-endowed equilibrium exists in this instance.

\section{The Integrality gap of $2$-Player Instances with Subaddititve Valuations}\label{app-subaddititve}

We now show that the integrality gap of instances with two players, both with subadditive valuations is strictly less than $2$. We consider some fractional solution $\{x_{i,S}\}$ of the LP and show how to round it to an integral solution that {provides an approximation ratio better than $2$.} We will prove the that there is an integral allocation $(S_1,S_2)$ such that $v_1(S_1)+v_2(S_2)\geq (\frac 1 2 + \frac 1 {2m})\Sigma_{i=1}^2\Sigma_{S}x_{i,S}v_i(S)$, as needed.

Suppose, without loss of generality, that $\Sigma_{S}x_{1,S}v_1(S)\geq \Sigma_{S}x_{2,S}v_2(S)$.  Sample a bundle $S_1$ by the distribution that assigned probability $x_{1,S}$ to each bundle $S$. Allocate $S_1$ to bidder $1$ and $M-S_1$ to bidder $2$. The expected value of this rounded solution is exactly $\Sigma_Sx_{1,S}\cdot (v_1(S)+v_2(M-S))$. It is therefore enough to prove that $\Sigma_Sx_{1,S}\cdot v_2(M-S)\geq \frac {\Sigma_Sx_{2,S}v_2(S)} {m}$.  This is so as this implies that the value of the solution is at least $\Sigma_Sx_{1,S}v_1(S)+\frac {\Sigma_Sx_{2,S}v_2(S)} {m}$ and $\Sigma_Sx_{1,S}v_1(S)\geq \Sigma_Sx_{2,S}v_2(S)$ so overall the expected value is at least $(\frac 1 2 + \frac 1 {2m})\Sigma_{i=1}^2\Sigma_{S}x_{i,S}v_i(S)$, as needed.

Given player $i$ and item $j$, let $q^i_j=\Sigma_{S:j\in S}x_{i,S}$. In particular, for any item $j$ we have that $q^1_j+q^2_j\leq 1$. Observe that the probability that player $2$ receives item $j$ is $1-q^1_j\geq q^2_j$. That is, let $\mathcal D$ be the distribution in which player $2$ receives bundle $S$ with probability $x_{2,S}$ and $\mathcal D'$ be the distribution where player $2$ receives bundle $M-S$ with probability $x_{1,S}$. The marginal distribution of player $2$ receiving item $j$ in $\mathcal D'$ is at least the marginal distribution of player $2$ receiving item $j$ in $\mathcal D$. We have that:
\begin{align*}
E_{S\sim \mathcal D'}[v_2(S)] &\geq \frac {E_{S\sim \mathcal D'}[\Sigma_{j\in S}v_2(\{j\})]} m=\frac {\Sigma_j(1-q^1_j)\cdot v_2(\{j\})} m \\
&\geq \frac {\Sigma_jq^2_j\cdot v_2(\{j\})} m \geq \frac {\Sigma_Sx_{2,S}v_2(S)} m =\frac {E_{S\sim \mathcal D}[v_2(S)]} m
\end{align*}

where in the first and last inequalities we use the subadditivity of $v_2$.

\section{The Communication Complexity of Finding a Local Maximum}\label{appendix-comm-hardness}

The paper \cite{BDN18} studies the following communication variant of local search: let $G=(V,E)$ be a known graph. Alice holds a function $f_A:V\rightarrow \mathbb R$ and Bob holds a function $f_B:V\rightarrow \mathbb{R}$. The goal is to find a local maximum of the graph: a vertex $v$ such that $f_A(v)+f_B(v)\geq f_A(u)+f_B(u)$ for any neighbor $u$ of $v$. In particular, it is shown in \cite{BDN18} that if $G$ is the odd graph then finding a local maximum requires $|V|^c$ bits of communication, for some constant $c$. This results holds even for randomized protocols.

We utilize this communication hardness result to prove that the communication complexity of finding a local maximum in combinatorial auctions is $exp(m)$. The proof is similar to the proof of Claim \ref{claim-value-hardness}. Again, let the number of items be $m=2k+1$ for some integer $k\geq 1$. The valuation of each player $i$ belongs to the following family:

\begin{align*}
	v_i(S)=\left\{
	\begin{array}{ll}
	|S|, & |S|\leq k-1\\
	k-1+\frac 1 2+c^i_{M-S}, & |S|=k \\
	k-1+\frac 3 {4}+c^i_S, & |S|=k+1 \\
	k, & |S|\geq k+2.
	\end{array}
	\right.
\end{align*}
It is not hard to see that if for every $S$ it holds that $c^i_S\leq \frac 1 {4}$ then the valuations are monotone and submodular. Also, similarly to before, an allocation $(S,M-S)$ can be a local maximum if either $|S|=k$ or $|M-S|=k$.

In our reduction from the communication hardness of local search on the $k$'th odd graph, we have $m=2k+1$ items. We let the valuation of Alice belong to the family above with $c^A_S=f_A(S)$.
Bob's valuation is similarly defined with $c^B_S=f_B(S)$. The proof that every local maximum of the odd graph is a local maximum of the combinatorial auction is very similar to the corresponding part of the proof of Claim \ref{claim-value-hardness}. We proceed similarly as in the proof of Claim \ref{claim-value-hardness}, except that we use the fact that by construction of our valuations $v_1(S)+v_2(M-S)=v_1(M-S)+v_2(S)$. That is, the value of every allocation $(S,M-S)$ ($|S|=k$) is $2k-2+\frac 1 2 + \frac 3 4 + c^A_S+c^B_S$ and the value of every allocation $(M-S,S)$ ($|S|=k$) is also $2k-2+\frac 1 2 + \frac 3 4 + c^A_S+c^B_S$. Moving a single item $j$ increases the welfare if and only if $c^A_S+c^B_S\geq c^A_{(M-S)\cup\{j\}}+c^B_{(M-S)\cup\{j\}}$.

This establishes that $|V|^c=exp(m)$ bits of communication are needed to find a local maximum in a combinatorial auction with submodular bidders. 
}
\end{document}